\newcommand{\blind}{0}
\newtheorem{theorem}{Theorem}
\newtheorem{lemma}{Lemma}
\theoremstyle{definition}
\theoremstyle{definition}
\begin{document}

\bibliographystyle{jasa3}

\def\spacingset#1{\renewcommand{\baselinestretch}%
{#1}\small\normalsize} \spacingset{1}

\newcommand{\bm}[1]{\mathbf{#1}}
\newcommand{\bb}[1]{\boldsymbol{#1}}
\newcommand{\ind}{\perp\!\!\!\!\perp}


\if0\blind
{
  \title{\bf Factor Importance Ranking and Selection \\ using Total Indices}
  \author{
    Chaofan Huang \hspace{1mm} and \hspace{1mm} V. Roshan Joseph\thanks{Corresponding author: roshan@gatech.edu.} \vspace{3mm} \\
    H. Milton Stewart School of Industrial and Systems Engineering, \\
    Georgia Institute of Technology, Atlanta, GA, 30332}
  \maketitle
} \fi

\if1\blind
{
  \bigskip
  \bigskip
  \bigskip
  \begin{center}
    {\LARGE\bf Factor Importance Ranking and Selection \\ using Total Indices}
\end{center}
  \medskip
} \fi

\begin{abstract}
Factor importance measures the impact of each feature on output prediction accuracy. Many existing works focus on the model-based importance, but an important feature in one learning algorithm may hold little significance in another model. Hence, a factor importance measure ought to characterize the feature's predictive potential without relying on a specific prediction algorithm. Such algorithm-agnostic importance is termed as \emph{intrinsic importance} in \citet{williamson2023general}, but their estimator again requires model fitting. To bypass the modeling step, we present the equivalence between predictiveness potential and total Sobol' indices from global sensitivity analysis, and introduce a novel consistent estimator that can be directly estimated from noisy data. Integrating with forward selection and backward elimination gives rise to \texttt{FIRST}, Factor Importance Ranking and Selection using Total (Sobol') indices. Extensive simulations are provided to demonstrate the effectiveness of \texttt{FIRST} on regression and binary classification problems, and a clear advantage over the state-of-the-art methods.
\end{abstract}

\noindent%
{\it Keywords: Sensitivity analysis; Sobol' indices; Variable importance; Variable selection.}  
\vfill

\spacingset{1.5} 

\section{Introduction}
\label{sec:introduction}

An important problem in statistics and machine learning is to understand which input variables affect the output variable and to rank them according to their importance based on the available input-output data. Traditionally, this is done by fitting a linear regression model and using the standardized coefficients (t-values or p-values) to assess the significance of the variables. More advanced techniques exist that decomposes the overall $R^2$ into sum of contributions from each variable \citep{gromping2007estimators}. Nevertheless, the overly simplistic linear regression model would fail to capture non-linearities and higher-order interactions, resulting in misleading estimates for the factor importance, and more importantly, the chance of missing important variables. 

With the advent of nonparameteric regression and machine learning models, e.g., generalized additive model \citep{hastie2017generalized,lin2006component}, Gaussian process \citep{rasmussen2006gaussian}, tree models \citep{breiman2001random,friedman2001greedy}, neural networks \citep{lecun2015deep}, etc., more sophisticated learning algorithms are recommended over linear regression to improve the prediction performance. Many factor importance measures are also proposed to reflect how much the predictive accuracy of a given algorithm depends on a particular variable \citep{breiman2001random,gevrey2003review,gromping2009variable,wei2015variable,lundberg2017unified,fisher2019all,murdoch2019interpretable,wojtas2020feature}, i.e., the decrease in predictiveness (e.g. $R^2$ for regression and accuracy for classification) when a variable is dropped or permuted. While such importance measures provide some interpretability about the black-box model, there is a chance that these models would fail to capture the complex input-output relationship correctly, and thus potentially leading to biased importance measures. Sometimes it could also happen that several learning algorithms fit the data almost equally well, yet each provide different importance measures, complicating the determination of the correct importance. To reduce the risk of model dependence, an ensemble of models is also investigated \citep{fisher2019all,williamson2021nonparametric,williamson2023general}, but at a higher computational cost, especially for big data. 

Given the concern and limitation of the model-based approaches, \citet{williamson2023general} argue that the factor importance should be \emph{algorithm-agnostic}. It should measure the population-level predictive potential of the factor, without being tied to any prediction algorithm. This is referred to as the \emph{intrinsic importance} in \citet{williamson2023general}, which is defined as the predictiveness gap between the \emph{optimal} prediction function using all the variables versus that fitted on the subset after removing the factors of interest. Albeit being model-agnostic, the estimator from \citet{williamson2023general} again involves fitting predictive models twice, one on all variables and the other on the carefully selected subset. They claim that with flexible enough ensemble of learning techniques, the risk of systematic bias from model misspecification can be minimized. Further improvement to detect null importance requires sample-splitting and cross-fitting, which will result in larger computational expense.  

On the other hand, factor importance is also of interest in sensitivity analysis \citep{sensitivitybook2021}, where the objective is to identify inputs of computer code that have the most substantial influence on the output and fix the ones that the output is not sensitive to. Various techniques have been proposed, including Morris screening method \citep{morris1991factorial}, variance-based (total) Sobol' indices \citep{sobol1993sensitivity}, etc. Recently, \citet{hart2018approximation} draw the equivalence between the total Sobol' index and the predictiveness potential of a factor, more specifically, the optimal approximation error introduced without using the factor of interest. This demonstrates the potential of estimating the intrinsic factor importance via total Sobol' indices. Though many efficient Monte Carlo based methods are available for the total Sobol' indices estimation, they all require having assess to the computer model. One solution is to construct a surrogate model (Gaussian process) and analytically derive the indices assuming independence \citep{oakley2004probabilistic,chen2005analytical}, but the independence assumption rarely holds in real data applications, not to mention about the challenges associated with the model fitting. 

To overcome the aforementioned limitations, we propose an efficient algorithm for estimating total (Sobol') indices directly from the randomly scattered noisy data. Our method is built upon the nearest-neighbor estimator proposed by \citet{broto2020variance}. We further adapt our estimation algorithm with forward selection and backward elimination to propose \texttt{FIRST}, Factor Importance Ranking and Selection using Total (Sobol') indices. To the best of our knowledge, \texttt{FIRST} is the first factor importance procedure that does not involve any model fitting, and hence safeguarding against the danger of model misspecification. 
     
There is also another line of research aiming to assess the significance of the variables nonparametrically \citep{szekely2007measuring,li2012feature,chatterjee2021new,azadkia2021simple,huang2022kernel}. Although the measures from these techniques can be used for factor selection, the ranking of the importance is not reliable especially when the inputs are correlated. Other approaches consider the aggregation of local importance such as derivative \citep{bai2014kernel,ye2012learning,yang2016model}, sharing similar spirit of the derivative-based sensitivity measure \citep{sobol2009}. Again this kind of measure is valid for factor selection, but when there is strong non-linearity in the model, \citet{sobol2009} point out the derivative-based measure could yield misleading conclusion about the factor importance.  

The article is organized as follows. Seciont~\ref{sec:factor_importance_and_total_sobol_indices} reviews the total Sobol' indices and presents its equivalence to the intrinsic importance measures. Section~\ref{sec:estimating_total_sobol_index_from_noisy_data} discusses a novel procedure for estimating the total Sobol' indices directly from noisy data. Section~\ref{sec:rethinking_factor_importance} proposes \texttt{FIRST} for factor importance and selection. Extensive numerical comparison on both synthetic and real data are provided in Section~\ref{sec:simulations_on_regression_problems} for regression and Section~\ref{sec:extension_to_binary_classification_problems} for binary classification. We conclude the article with some remarks in Section~\ref{sec:conclusion}.

\section{Factor Importance and Total Sobol' Indices}
\label{sec:factor_importance_and_total_sobol_indices}

This section we define the factor importance through its predictive potential, then review the total Sobol' indices \citep{sobol1993sensitivity,homma1996importance,sobol2001global} from the global sensitivity analysis literature, and last demonstrate the equivalence between the two.  

Let $\bm{X}=(X_1,\ldots,X_p)\in\mathcal{X}\subseteq\mathbb{R}^{p}$ be a $p$-dimensional random vectors with joint distribution $\mathbb{P}_{\mathcal{X}}$. Let the scalar output $Y = f^{*}(\bm{X}) \in \mathcal{Y} \subseteq \mathbb{R}$ be some deterministic function of the input $\bm{X}$, where $f^{*}\in\mathcal{F}$, the class of functions from $\mathcal{X}\to\mathcal{Y}$ that are square integrable over $\mathbb{P}_{\mathcal{X}}$. In this section, assume that the model $f^{*}$ is known and there is no random error associated with $Y$. The randomness in $Y$ is exclusively from the uncertainty in $\bm{X}$.

Let us first formally define the notations. For any non-empty $u\subset 1\!:\!p$, let $\bm{X}_{u}$ be the elements of $\bm{X}$ with index $u$, $\mathcal{X}_{u}$ be the sample space of $\bm{X}_{u}$, and $\mathcal{F}_{u}:=\{f\in\mathcal{F}:f(\bm{x})=f(\bm{x}^{\prime}) \text{ for any } \bm{x},\bm{x}^{\prime}\in\mathcal{X} \text{ such that } \bm{x}_{u}=\bm{x}^{\prime}_{u}\}$ be the class functions in $\mathcal{F}$ that depend only on the elements of the input with index in $u$. Denote $|u|$ be its cardinality, $-u:= 1\!:\!p\backslash u$ be its complement, and $\mathcal{P}(u)$ be its power set (the set of all subsets of $u$). For singleton $\{i\}$, abbreviate $\bm{X}_{\{i\}}$ and $\bm{X}_{-\{i\}}$ to $X_{i}$ and $\bm{X}_{-i}$ respectively. 

\subsection{Factor Importance}
\label{subsec:factor_importance}

The concept of quantifying the importance of an input through its predictive power is rooted in many \emph{model-based} factor importance measures \citep{darlington1968multiple,breiman2001random,lundberg2017unified,fisher2019all}. However, when the model is misspecified, the model-based measures cannot truly reflect the relevance between the inputs and the output, e.g., linear model often fails to capture the contribution from the nonlinear effect accurately. Hence, the factor importance should be measured by its population-level predictiveness potential, i.e., it should not be contingent on any specific prediction model. This idea is proposed as the \emph{intrinsic} importance in \citet{williamson2023general}.  

Let $V(f;f^{*},\mathbb{P}_{\mathcal{X}})$ be a measure of the predictiveness for any function $f\in\mathcal{F}$, e.g., surrogate model, when the true model is $f^{*}\in\mathcal{F}$ with input distribution $\mathbb{P}_{\mathcal{X}}$. The larger the $V(f;f^{*},\mathbb{P}_{\mathcal{X}})$, the more predictive $f$ is. Some popular predictiveness measures include $R^2$ for regression, accuracy for classification, etc. It is obvious that the function in $\mathcal{F}$ that maximizes $V(f;f^{*},\mathbb{P}_{\mathcal{X}})$ is the oracle model $f^{*}$ itself, i.e., $f^{*} = \arg\max_{f\in\mathcal{F}} V(f;f^{*},\mathbb{P}_{\mathcal{X}})$. Similarly, define $f^{*}_{-u}$ be the function in $\mathcal{F}_{-u}$ that maximize $V(f;f^{*},\mathbb{P}_{\mathcal{X}})$, which is the best possible predictive model without using variables $\bm{X}_{u}$. The \emph{population-level importance} $\psi_{u}$ of the factors $\bm{X}_{u}$ relative to the all input $\bm{X}$ is defined as 
\begin{align}
    \label{eq:factor_importance}
    \psi_{u} := V(f^{*};f^{*},\mathbb{P}_{\mathcal{X}}) - V(f^{*}_{-u};f^{*},\mathbb{P}_{\mathcal{X}}),
\end{align}
which quantifies the oracle approximation error introduced by excluding variables $\bm{X}_{u}$. $\psi_{u}$ can be viewed as the optimal lower bound for the approximation error to $f^{*}(\bm{X})$ for any predictive model training only on $\bm{X}_{-u}$. By construction, $\psi_{u}\geq 0$. The larger value of $\psi_{u}$ implies that $\bm{X}_{u}$ is more important in terms of its predictive potential, and $\psi_{u}=0$ if and only if the true model $f^{*}$ does not depend on $\bm{X}_{u}$ at all. For the rest of this section we will introduce the total Sobol' indices, and then show it coincides with the intrinsic importance $\psi_{u}$ \eqref{eq:factor_importance} when $R^{2}$ is the predictiveness measure. 

\subsection{Total Sobol' Indices}
\label{subsec:total_sobol_indices}

For any function $f\in\mathcal{F}$ that is square integrable over the input distribution $\mathbb{P}_{\mathcal{X}}$, \cite{sobol1993sensitivity,sobol2001global} show that it can be decomposed into 
\begin{align}
    \label{eq:decomposition}
    f(\bm{X}) = f_{0} + \sum_{k=1}^{p}\sum_{|u|=k}f_{u}(\bm{X}_{u}),
\end{align}
where 
\begin{align*}
    f_0 &= \mathbb{E}[f(\bm{X})], \\
    f_{i}(X_{i}) &= \mathbb{E}[f(\bm{X})|X_{i}] - f_{0}, \\
    f_{i,j}(X_{i},X_{j}) &= \mathbb{E}[f(\bm{X})|X_{i},X_{j}] - f_{i}(X_{i}) - f_{j}(X_{j}) - f_{0}, \\
    \vdots & \\
    f_{u}(\bm{X}_{u}) &= \mathbb{E}[f(\bm{X})|\bm{X}_{u}] - \sum_{v\in \mathcal{P}(u)\backslash u} f_{v}(\bm{X}_{v}).
\end{align*}
When the input variables are \emph{independent}, the decomposition is unique, and $f_{u}(\bm{X}_{u})$'s have zero mean and are mutually orthogonal, yielding similar decomposition for output variance, 
\begin{align*}
    \text{Var}[f(\bm{X})] = \sum_{k=1}^{p}\sum_{|u|=k}\text{Var}[f_{u}(\bm{X}_{u})],
\end{align*}
which is also known as the ANOVA (analysis of variance) decomposition \citep{efron1981jackknife}. The Sobol' index \citep{sobol1993sensitivity} of $\bm{X}_{u}$ under independent inputs is defined as 
\begin{align}
    \label{eq:sobol_index}
    S_{u} := \frac{\text{Var}[f_{u}(\bm{X}_{u})]}{\text{Var}[f(\bm{X})]}.
\end{align}
It is easy to see that $0 \leq S_{u} \leq 1$ and $\sum_{k=1}^{p}\sum_{|u|=k}S_{u}=1$. The \emph{first-order} Sobol' indices \citep{homma1996importance} is proposed for capturing the output variance that is associated with the main effect, i.e., the first-order Sobol' index of $X_{i}$ is defined as 
\begin{align}
    \label{eq:first_sobol} 
    S_{i} := \frac{\text{Var}[f_{i}(X_{i})]}{\text{Var}[f(\bm{X})]} = \frac{\text{Var}[\mathbb{E}\{f(\bm{X})|X_{i}\}]}{\text{Var}[f(\bm{X})]} = \frac{\text{Var}[f(\bm{X})] - \mathbb{E}[\text{Var}\{f(\bm{X})|X_{i}\}]}{\text{Var}[f(\bm{X})]}, 
\end{align}
where the last equality follows from the variance decomposition formula. $S_{i}$ measures the expected proportional reduction in $\text{Var}[f(\bm{X})]$ if $X_{i}$ is fixed to some constant, so intuitively the higher the $S_{i}$, the more important $X_{i}$ is. However, $S_{i}$ fails to account for any effect from the interaction between $X_{i}$ and other inputs. To overcome this limitation, \citet{homma1996importance} proposed the \emph{total} Sobol' index $S_{i}^{\text{tot}}$ that not only capture the main effect but also all the interaction effects involving $X_{i}$, 
\begin{align}
    \label{eq:total_sobol_1}
    S_{i}^{\text{tot}} := \sum_{v\in\mathcal{P}(1:p):\{i\}\cap v \neq \emptyset} S_{v} = 1 - \sum_{v\in\mathcal{P}(1:p\backslash\{i\})} S_{v},
\end{align}
and thus $S_{i} \leq S_{i}^{\text{tot}}$ and the difference reflects the output variance attributed by the interaction effects.  However, when the inputs are \emph{correlated}, the $f_{u}(\bm{X}_{u})$'s in the decomposition \eqref{eq:decomposition} are no longer orthogonal. \citet{li2010global} propose the following generalization, 
\begin{align}
    \label{eq:generalized_sobol_index}
    \tilde{S}_{u} := \frac{\text{Cov}(f_{u}(\bm{X}_{u}),f(\bm{X}))}{\text{Var}[f(\bm{X})]},
\end{align} 
which reduces to \eqref{eq:sobol_index} for independent inputs. The first-order Sobol' indices \eqref{eq:first_sobol} and the total Sobol' indices \eqref{eq:total_sobol_1} can be defined in the same way using $\tilde{S}_{u}$, and in what follows we focus on this generalized definition of the Sobol' indices. \citet{li2010global} shows that $\sum_{k=1}^{p}\sum_{|u|=k}\tilde{S}_{u}=1$, but $\tilde{S}_{u}$ could be negative depending on $\text{Cov}(f_{u}(\bm{X}_{u}),f(\bm{X}))$, and hence it is possible to see $S_{i} > S_{i}^{\text{tot}}$ in the presence of dependent inputs. This negative contribution from the interaction effect creates ambiguity when trying to interpret Sobol' indices as a measure for factor importance \citep{chastaing2015generalized,song2016shapley,owen2017shapley,hart2018approximation}. On the other hand, \cite{kucherenko2012estimation} show that equivalently the total Sobol' index of $X_{i}$ is  
\begin{align*}
    S_{i}^{\text{tot}} = \frac{\text{Var}[f(\bm{X})] - \text{Var}[\mathbb{E}\{f(\bm{X})|\bm{X}_{-i}\}]}{\text{Var}[f(\bm{X})]} = \frac{\mathbb{E}[\text{Var}\{f(\bm{X})|\bm{X}_{-i}\}]}{\text{Var}[f(\bm{X})]}  \geq 0, 
\end{align*}
showing that $S_{i}^{\text{tot}}$ captures the expected proportional residual variance if all other factors $\bm{X}_{-i}$ are fixed. Hence, the smaller the $S_{i}^{\text{tot}}$, the less important $X_{i}$ is, and $S_{i}^{\text{tot}}=0$ if and only if the model $f$ does not include $X_{i}$. Moreover, the above definition of the total Sobol' indices generalizes to any subset $u\subset 1\!:\!p$, i.e., 
\begin{align}
    \label{eq:total_sobol_2}
    S_{u}^{\text{tot}} = \frac{\mathbb{E}[\text{Var}\{f(\bm{X})|\bm{X}_{-u}\}]}{\text{Var}[f(\bm{X})]}.
\end{align}

To address the interpretation issue of the Sobol' indices in the presence of dependent inputs, there is also a growing interest on investigating other global sensitivity analysis methods, including the density-based measure \citep{borgonovo2007new,borgonovo2011moment,plischke2013global}, derivative-based measure \citep{sobol2009}, Shapley effect \citep{owen2014sobol,song2016shapley,broto2020variance}, etc. For a comprehensive review, please see \citet{borgonovo2016sensitivity}, \citet{razavi2021future}, and \citet{sensitivitybook2021}.

\subsection{Equivalence under Regression}
\label{subsec:equivalence_under_regression}

The $R^{2}$ predictiveness measure is the conventional choice for regression. It is defined as 
\begin{align*}
    V(f;f^{*},\mathcal{P}_{\mathcal{X}})=1 - \frac{\mathbb{E}[\{f(\bm{X})-f^{*}(\bm{X})\}^2]}{\text{Var}[f^{*}(\bm{X})]},
\end{align*}
which measures the proportion of variability in $f^{*}(\bm{X})$ that can be explained by $f(\bm{X})$ under the input distribution $\mathbb{P}_{\mathcal{X}}$.

\begin{theorem}
\label{th:total_sobol_approximation}
When $R^2$ is the predictiveness measure, $\psi_{u}=S_{u}^{\text{tot}}$ for $u\subset 1\!:\!p$.   
\end{theorem}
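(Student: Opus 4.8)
The plan is to unwind the definition of $\psi_{u}$ under the $R^{2}$ predictiveness measure and recognize the leftover term as the numerator of $S_{u}^{\text{tot}}$ in \eqref{eq:total_sobol_2}. First I would observe that the oracle attains $V(f^{*};f^{*},\mathbb{P}_{\mathcal{X}})=1$, so that
\[
\psi_{u} \;=\; 1 - V(f^{*}_{-u};f^{*},\mathbb{P}_{\mathcal{X}}) \;=\; \frac{\mathbb{E}\big[\{f^{*}_{-u}(\bm{X}) - f^{*}(\bm{X})\}^{2}\big]}{\text{Var}[f^{*}(\bm{X})]}.
\]

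The crux is the second step: identifying the reduced oracle $f^{*}_{-u}$. By definition it maximizes $V(\cdot;f^{*},\mathbb{P}_{\mathcal{X}})$ over $\mathcal{F}_{-u}$, which is equivalent to minimizing the $L^{2}(\mathbb{P}_{\mathcal{X}})$ error $\mathbb{E}[\{f(\bm{X})-f^{*}(\bm{X})\}^{2}]$ among all square-integrable functions that depend on $\bm{X}$ only through $\bm{X}_{-u}$. This is precisely the orthogonal-projection characterization of conditional expectation, so $f^{*}_{-u}(\bm{X}) = \mathbb{E}[f^{*}(\bm{X})\mid\bm{X}_{-u}]$. Here I would also check the mild point that this conditional expectation is itself square integrable (by conditional Jensen) and $\bm{X}_{-u}$-measurable, hence a legitimate element of $\mathcal{F}_{-u}$, so that the infimum is genuinely attained.

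Third, substituting this minimizer back, the numerator becomes $\mathbb{E}\big[\{\mathbb{E}(f^{*}(\bm{X})\mid\bm{X}_{-u}) - f^{*}(\bm{X})\}^{2}\big] = \mathbb{E}\big[\text{Var}\{f^{*}(\bm{X})\mid\bm{X}_{-u}\}\big]$ by the tower property (equivalently, the law of total variance). Dividing by $\text{Var}[f^{*}(\bm{X})]$ then matches the right-hand side of \eqref{eq:total_sobol_2} exactly, giving $\psi_{u}=S_{u}^{\text{tot}}$.

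I expect the only real subtlety to be the projection argument in the second step, namely confirming that the optimum over $\mathcal{F}_{-u}$ exists and equals the conditional expectation; everything else is routine bookkeeping with conditional variance. One caveat worth flagging is that this identification of $f^{*}_{-u}$ does not require independence of the coordinates of $\bm{X}$ — it is purely an $L^{2}$ fact — which is consistent with the fact that \eqref{eq:total_sobol_2} was already stated for general (possibly dependent) inputs.
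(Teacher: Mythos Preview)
Your proposal is correct and mirrors the paper's own proof almost exactly: both observe $V(f^{*};f^{*},\mathbb{P}_{\mathcal{X}})=1$, identify $f^{*}_{-u}$ as the $L^{2}$ projection $\mathbb{E}[f^{*}(\bm{X})\mid\bm{X}_{-u}]$, and then use the identity $\mathbb{E}[(\mathbb{E}[Y|X]-Y)^{2}]=\mathbb{E}[\text{Var}(Y|X)]$ (stated in the paper as Lemma~\ref{lm:expecected_mse}) to land on \eqref{eq:total_sobol_2}. Your extra remarks on square-integrability of the conditional expectation and the irrelevance of input independence are sound and go slightly beyond what the paper spells out, but the route is the same.
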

\begin{proof}
When $R^2$ is the predictiveness measure, it is obvious that $ V(f^{*};f^{*},\mathcal{P}_{\mathcal{X}}) = 1$ and  
\begin{align*}
    f^{*}_{-u}(\bm{X}_{-u}) &= \arg\max_{f\in\mathcal{F}_{-u}}V(f;f^{*},\mathcal{P}_{\mathcal{X}}) \\
    &= \arg\max_{f\in\mathcal{F}_{-u}}\left(1 - \frac{\mathbb{E}[\{f(\bm{X}_{-u})-f^{*}(\bm{X})\}^2]}{\text{Var}[f^{*}(\bm{X})]}\right) \\
    &= \arg\min_{f\in\mathcal{F}_{-u}}\mathbb{E}[\{f(\bm{X}_{-u})-f^{*}(\bm{X})\}^2] \\
    &= \mathbb{E}[f^{*}(\bm{X})|\bm{X}_{-u}]
\end{align*}
is the projection of $f^{*}(\bm{X})$ onto $\mathcal{F}_{-u}$. It follows that the factor importance of $\bm{X}_{u}$ is 
\begin{align*}
    \psi_{u} &= V(f^{*};f^{*},\mathbb{P}_{\mathcal{X}}) - V(f^{*}_{-u};f^{*},\mathbb{P}_{\mathcal{X}}) \\
    &= \frac{\mathbb{E}[\{\mathbb{E}[f^{*}(\bm{X})|\bm{X}_{-u}]-f^{*}(\bm{X})\}^2]}{\text{Var}[f^{*}(\bm{X})]} \\
    &= \frac{\mathbb{E}[\text{Var}\{f^{*}(\bm{X})|\bm{X}_{-u}\}]}{\text{Var}[f^{*}(\bm{X})]} \\
    &= S_{u}^{\text{tot}},
\end{align*}
which coincides with the total Sobol' index of $\bm{X}_{u}$. The third equality follows from the fact that $\mathbb{E}[(\mathbb{E}[Y|X]-Y)^2]=\mathbb{E}[\text{Var}(Y|X)]$ (Lemma~\ref{lm:expecected_mse} in Appendix~\ref{appendix:proof_of_lemma}).
\end{proof}

\citet{hart2018approximation} provide an alternative proof for this approximation theoretic perspective of the total Sobol' indices. With this equivalence, computing the factor importance $\psi_{u}$ reduces to the computation of the total Sobol' index $S_{u}^{\text{tot}}$. Hence, utilizing the procedure presented in the next section for estimating the total Sobol' indices directly from data, we could circumvent the model fitting step for approximating $f^{*}$ and $f^{*}_{-u}$ that are required by the existing estimator of $\psi_{u}$ proposed by \citet{williamson2023general}. This also avoid the risk of systematic bias from model misspecification. 

\section{Estimating Total Sobol' Index from Noisy Data}
\label{sec:estimating_total_sobol_index_from_noisy_data}

In the computer experiments literature, various Monte Carlo methods have been proposed for estimating the total Sobol' indices \eqref{eq:total_sobol_2} efficiently, including Pick-and-Freeze estimator \citep{homma1996importance}, Jansen-Sobol estimator \citep{jansen1999analysis,sobol2001global,saltelli2010variance,kucherenko2012estimation}, Double Monte Carlo estimator \citep{song2016shapley}, etc. These methods all require knowing about (i) the model $f^{*}$, or at the very least, the ability to evaluate it for any given input, and (ii) the input distribution and its conditional in closed-form. However, neither is available for our problem, where the goal is to assess the importance of the variables solely based on the data $\{(\bm{x}^{(n)},y^{(n)})\}_{n=1}^{N}$. One possible way to overcome these obstacles is to fit (i) a surrogate model $\hat{f}$ on $\{(\bm{x}^{(n)},y^{(n)})\}_{n=1}^{N}$ to emulate the \emph{unknown} true model $f^{*}$ and (ii) an empirical distribution estimator on $\{\bm{x}^{(n)}\}_{n=1}^{N}$ with the capability of simulating samples from $X_{i}|\bm{X}_{-i}$. As alluded to in the previous sections, we try to avoid any model fitting because (i) it can be an arduous task by itself, especially with big data, and (ii) it can lead to estimation bias if the model is misspecified. On the other hand, Monte Carlo methods require simulating synthetic inputs, but as criticized in \citet{mase2022variable}, the trustworthiness of the importance measures derived from these synthetic inputs is questionable because they can be ``unlikely, physically impossible, or even logically impossible". Hence, in this section we propose an efficient algorithm for computing the total Sobol' indices \eqref{eq:total_sobol_2} using only the collected noisy data $\{(\bm{x}^{(n)},y^{(n)})\}_{n=1}^{N}$. 

\subsection{Design-based Double Monte Carlo Estimator}
\label{subsec:double_monte_carlo_estimator}

Let us start by reviewing the Double Monte Carlo estimator proposed in \citet{song2016shapley} that lays the crucial groundwork for the total Sobol' indices estimator using only data. Recall that the total Sobol' index $S_{i}^{\text{tot}}$ of $X_{i}$ with respect to function $f^{*}$ is defined as 
\begin{align*}
    S_{i}^{\text{tot}} = \frac{\mathbb{E}[\text{Var}\{f^{*}(\bm{X})|\bm{X}_{-i}\}]}{\text{Var}[f^{*}(\bm{X})]}.
\end{align*}
We focus on the estimation for a single index, i.e., $u=\{i\}$ for some $i\in1\!:\!p$, because it is often the interest to assess the importance of individual factor rather than a group of them, but the estimation procedure generalizes for any subset $u\subset 1\!:\!p$. 

The denominator $\text{Var}[f^{*}(\bm{X})]$ can be estimated by simple Monte Carlo using samples from $\mathbb{P}_{\mathcal{X}}$. The main difficulty comes from estimating the numerator,
\begin{align}
    \label{eq:total_sobol_effect}
    T_{i}= \mathbb{E}[\text{Var}\{f^{*}(\bm{X})|\bm{X}_{-i}\}].
\end{align}
For simplicity we refer $T_{i}$ as the total Sobol' effect of $X_{i}$, the un-normalized version of the total Sobol' index. \citet{song2016shapley} proposed to use a double loop Monte Carlo: the inner loop for the conditional variance using $N_{I}$ designs and the outer loop for the expectation using $N_{O}$ designs, i.e.,  
\begin{align}
    \label{eq:double_monte_carlo}
    \hat{T}^{\text{mc}}_{i} = \frac{1}{N_{O}}\sum_{m=1}^{N_{O}}\left\{\frac{1}{N_{I}-1}\sum_{j=1}^{N_{I}}\left(f^{*}(\bm{x}_{-i}^{(m)},x_{i}^{(m,j)})-\frac{1}{N_{I}}\sum_{l=1}^{N_{I}}f^{*}(\bm{x}_{-i}^{(m)},x_{i}^{(m,l)})\right)^2\right\},
\end{align}
where $\{\bm{x}_{-i}^{(m)}\}_{m=1}^{N_{O}}$ are i.i.d. samples from the distribution of $\bm{X}_{-i}$ and $\{x_{i}^{(m,j)}\}_{j=1}^{N_{I}}$ conditionally to $\bm{x}_{-i}^{(m)}$ is i.i.d. with respect to the distribution of $X_{i}|\bm{X}_{-i}=\bm{x}_{-i}^{(m)}$. The Double Monte Carlo estimator $\hat{T}^{\text{mc}}_{i}$ \eqref{eq:double_monte_carlo} is unbiased for $T_{i}$ \eqref{eq:total_sobol_effect}, and it is suggested that $N_{I}=3$ is sufficient \citep{song2016shapley}. When $N_{I} = 2$, the Double Monte Carlo estimator \eqref{eq:double_monte_carlo} reduces to the generalized Jansen-Sobol estimator in \citet{kucherenko2012estimation},
\begin{align*}
    \hat{T}^{\text{js}}_{i} = \frac{1}{N_{O}}\sum_{m=1}^{N_{O}}\frac{1}{2}\left(f^{*}(\bm{x}_{-i}^{(m)},x_{i}^{(m,1)}) - f^{*}(\bm{x}_{-i}^{(m)},x_{i}^{(m,2)})\right)^2.
\end{align*}

\subsection{Estimation from Clean Data}
\label{subsec:estimation_from_clean_data}

Before presenting the algorithm for estimating the total Sobol' indices from the noisy data, we first discuss the estimation using clean i.i.d. samples $\{(\bm{x}^{(n)},y^{(n)}=f^{*}(\bm{x}^{(n)}))\}_{n=1}^{N}$, i.e., there is no random noise associated with the output. It is worth emphasizing that here $f^{*}$ is neither known analytically nor assessable for evaluation, so the Double Monte Carlo estimator \eqref{eq:double_monte_carlo} is not applicable. To overcome this limitation, a consistent estimator based on the nearest-neighbor method is proposed in \citet{broto2020variance}. The key is to replace the inner loop designs $\{(\bm{x}_{-i}^{(m)},x_{i}^{(m,j)})\}_{j=1}^{N_{I}}$ by the $N_{I}$ samples in $\{\bm{x}^{(n)}\}_{n=1}^{N}$ with the values of their $-i$ entries closest to $\bm{x}_{-i}^{(m)}$. More formally, let $k_{-i}^{(m)}(j)$ be the index such that $\bm{x}_{-i}^{(k_{-i}^{(m)}(j))}$ is the $j$-th closest element to $\bm{x}_{-i}^{(m)}$ in $\{\bm{x}_{-i}^{(n)}\}_{n=1}^{N}$, with ties broken by random selection. The nearest-neighbor version of the Double Monte Carlo estimator \eqref{eq:double_monte_carlo} for $T_{i}$ is 
\begin{align}
    \label{eq:nearest_neighbor}
    \hat{T}^{\text{nn}}_{i} = \frac{1}{N_{O}}\sum_{m=1}^{N_{O}}\left\{\frac{1}{N_{I}-1}\sum_{j=1}^{N_{I}}\left(y^{(k_{-i}^{(m)}(j))}-\frac{1}{N_{I}}\sum_{l=1}^{N_{I}}y^{(k_{-i}^{(m)}(l))}\right)^2\right\},
\end{align} 
where $\{\bm{x}_{-i}^{(m)}\}_{m=1}^{N_{O}}$ for the outer loop Monte Carlo can either be all the samples $\{\bm{x}_{-i}^{(n)}\}_{n=1}^{N}$ or a random subsample (with or without replacement) of $\{\bm{x}_{-i}^{(n)}\}_{n=1}^{N}$ when $N$ is large. For convenience, in what follows we refer to \eqref{eq:nearest_neighbor} as the nearest-neighbor estimator. Under some mild assumptions on the smoothness of the input distribution $\mathbb{P}_{\mathcal{X}}$, \citet{broto2020variance} proved that $\hat{T}^{\text{nn}}_{i}$ converges to $T_{i}$ in probability as $N,N_{O}\to\infty$ for any model function $f$ that is bounded. The key is to recognize that as $N\to\infty$, any nearest neighbor $\bm{x}_{-i}^{(k_{-i}^{(m)}(j))}$ converges almost surely to $\bm{x}_{-i}^{(m)}$. It follows that $X_{i}|\bm{X}_{-i}=\bm{x}_{-i}^{(k_{-i}^{(m)}(j))}$ converges weakly to $X_{i}|\bm{X}_{-i}=\bm{x}_{-i}^{(m)}$, and hence $\{\bm{x}^{(k_{-i}^{(m)}(j))}\}_{j=1}^{N_{I}}$ can be viewed as i.i.d. samples simulated from $X|\bm{X}_{-i}=\bm{x}_{-i}^{(m)}$. Moreover, \citet{broto2020variance} further show that the rate of convergence for $\hat{T}^{\text{nn}}_{i}$ \eqref{eq:nearest_neighbor} is almost $o_{p}(N^{-1/2})$ by imposing a stronger assumption that (i) the first derivative of model function $f$ exist and is continuous and (ii) the input variables $\bm{X}$ are continuous, real-valued, restricted to a compact set where the density is lower-bounded (e.g., uniform or truncated Gaussian). The consistency and the rate of convergence is independent of (i) the problem dimension $p$ and (ii) the number of samples $N_{I}$ for estimating the conditional variance. For the denominator $\text{Var}[f^{*}(\bm{X})]$ of the total Sobol' indices, one can use the consistent estimator, $\frac{1}{N-1}\sum_{n=1}^{N}(y^{(n)}-\frac{1}{N}\sum_{l=1}^{N}y^{(l)})^2$.

\subsection{Extension to Noisy Data}
\label{subsec:extension_to_noisy_data}

Suppose now that the output is contaminated by random measurement error, i.e., 
\begin{align}
    \label{eq:noisy_output}
    Y = \tilde{f}(\bm{X},\epsilon) = f^{*}(\bm{X}) + \epsilon,
\end{align}
where $\epsilon$ is the random noise with zero mean and finite second moment and is \emph{independent} of the input $\bm{X}$. We will develop a consistent estimator for both $T_{i}=\mathbb{E}[\text{Var}\{f^{*}(\bm{X})|\bm{X}_{-i}\}]$, the total Sobol' effect of $X_{i}$ with respect to the true model $f^{*}$, and $\text{Var}[f^{*}(\bm{X})]$ given the i.i.d. noisy samples $\{(\bm{x}^{(n)},y^{(n)})\}_{n=1}^{N}$ of \eqref{eq:noisy_output}, where $y^{(n)}=f^{*}(\bm{x}^{(n)}) + \epsilon^{(n)}$. 

Consider under the framework of analysis of computer experiment that the random noise $\epsilon$ in \eqref{eq:noisy_output} can be treated as another stochastic variable, similar to $\bm{X}$, supplied as an input to the deterministic function $\tilde{f}$. The variance of $\epsilon$ is equivalent to $\tilde{T}_{\epsilon}$, the total Sobol' effect of $\epsilon$ with respect to the noisy model $\tilde{f}$, i.e.,  
\begin{align*}
    \tilde{T}_{\epsilon} = \mathbb{E}[\text{Var}\{\tilde{f}(\bm{X},\epsilon)|\bm{X}\}] = \mathbb{E}[\text{Var}\{f^{*}(\bm{X})+\epsilon|\bm{X}\}] = \mathbb{E}[\text{Var}(\epsilon)] = \text{Var}[\epsilon].
\end{align*}
The nearest-neighbor estimator \eqref{eq:nearest_neighbor} for $\tilde{T}_{\epsilon}$ using noisy data $\{(\bm{x}^{(n)},y^{(n)})\}_{n=1}^{N}$ is
\begin{align}
    \label{eq:nearest_neighbor_epsilon}
    \hat{\tilde{T}}^{\text{nn}}_{\epsilon} = \frac{1}{N_{O}}\sum_{m=1}^{N_{O}}\left\{\frac{1}{N_{I}-1}\sum_{j=1}^{N_{I}}\left(y^{(k_{1:p}^{(m)}(j))}-\frac{1}{N_{I}}\sum_{l=1}^{N_{I}}y^{(k_{1:p}^{(m)}(l))}\right)^2\right\}.
\end{align}
From the discussion in Subsection~\ref{subsec:estimation_from_clean_data}, $\hat{\tilde{T}}^{\text{nn}}_{\epsilon}$ is a consistent estimator for $\tilde{T}_{\epsilon}$, and hence also a consistent estimator for $\text{Var}[\epsilon]$. Here $k_{1:p}^{(m)}(j)$ is the index for the $j$-th nearest-neighbor to $\bm{x}^{(m)}$ in $\{\bm{x}^{(n)}\}_{n=1}^{N}$, where $\{\bm{x}^{(m)}\}_{m=1}^{N_{O}}$ for the outer loop Monte Carlo can either be all the samples $\{\bm{x}^{(n)}\}_{n=1}^{N}$ or a random subsample. Similarly, for $\tilde{T}_{i,\epsilon}$, the total Sobol' effect of $X_{i}$ and $\epsilon$ jointly with respect to the noisy model $\tilde{f}$, we have
\begin{align*}
    \tilde{T}_{i,\epsilon} = \mathbb{E}[\text{Var}\{f^{*}(\bm{X})+\epsilon|\bm{X}_{-i}\}] = \mathbb{E}[\text{Var}\{f^{*}(\bm{X})|\bm{X}_{-i}\}+\text{Var}(\epsilon)] = T_{i} + \text{Var}[\epsilon],
\end{align*}
because $\bm{X}_{-i}$ and $\epsilon$ are \emph{independent}. The corresponding nearest-neighbor estimator is 
\begin{align}
    \label{eq:nearest_neighbor_i_epsilon}
    \hat{\tilde{T}}^{\text{nn}}_{i,\epsilon} = \frac{1}{N_{O}}\sum_{m=1}^{N_{O}}\left\{\frac{1}{N_{I}-1}\sum_{j=1}^{N_{I}}\left(y^{(k_{-i}^{(m)}(j))}-\frac{1}{N_{I}}\sum_{l=1}^{N_{I}}y^{(k_{-i}^{(m)}(l))}\right)^2\right\}.
\end{align}
By the Slutzky’s theorem, $\hat{T}_{i}=\hat{\tilde{T}}^{\text{nn}}_{i,\epsilon}-\hat{\tilde{T}}^{\text{nn}}_{\epsilon}$ is a consistent estimator for $T_{i}$. Following the same line of thought, $\text{Var}[Y] = \text{Var}[\tilde{f}(\bm{X},\epsilon)] = \text{Var}[f^{*}(\bm{X})] + \text{Var}[\epsilon]$, and a consistent estimator for $\text{Var}[Y]$ is 
\begin{align}
    \label{eq:consistent_yvar}
    \widehat{\text{Var}[Y]} = \frac{1}{N-1}\sum_{n=1}^{N}\left(y^{(n)}-\frac{1}{N}\sum_{l=1}^{N}y^{(l)}\right)^2.
\end{align}
Again by the Slutzky’s theorem, $\widehat{\text{Var}[f^{*}(\bm{X})]}=\widehat{\text{Var}[Y]}-\hat{\tilde{T}}^{\text{nn}}_{\epsilon}$ is a consistent estimator for $\text{Var}[f^{*}(\bm{X})]$. Numerically this estimator is not always positive given the noisy data, we clip its value to 0 in the actual implementation. Moreover, $\text{Var}[f^{*}(\bm{X})]=0$ indicates that the output variance is exclusively from the random noise, and hence $\hat{S}_{i}^{\text{tot}} = 0$ for all $i\in1\!:\!p$. 

When $\widehat{\text{Var}[f^{*}(\bm{X})]}>0$, a consistent estimator for the total Sobol' index $S_{i}^{\text{tot}}$ of $X_{i}$ is 
\begin{align*}
    \hat{S}_{i}^{\text{tot}} = \frac{\hat{T}_{i}}{\widehat{\text{Var}[f^{*}(\bm{X})]}} = \frac{\hat{\tilde{T}}^{\text{nn}}_{i,\epsilon}-\hat{\tilde{T}}^{\text{nn}}_{\epsilon}}{\widehat{\text{Var}[Y]}-\hat{\tilde{T}}^{\text{nn}}_{\epsilon}}.
\end{align*}
Numerically the numerator can again be negative, and hence we also clip its value to 0. Let us refer to this as the Noise-Adjusted Nearest-NEighbor (\texttt{NANNE}) estimator, and the steps involved in its computation are outlined in Algorithm~\ref{algo:nanne}. K-d tree \citep{bentley1975multidimensional} can be employed to accelerate the nearest neighbor search in practice. 

\begin{figure}[!t]
    \centering
    \begin{subfigure}{0.32\textwidth}
        \centering
        \includegraphics[width=\textwidth]{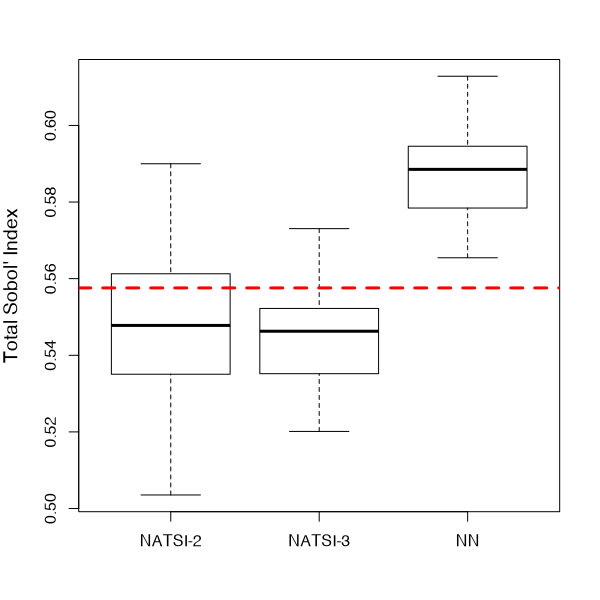}
        \caption{$X_{1}$}
    \end{subfigure}
    \begin{subfigure}{0.32\textwidth}
        \centering
        \includegraphics[width=\textwidth]{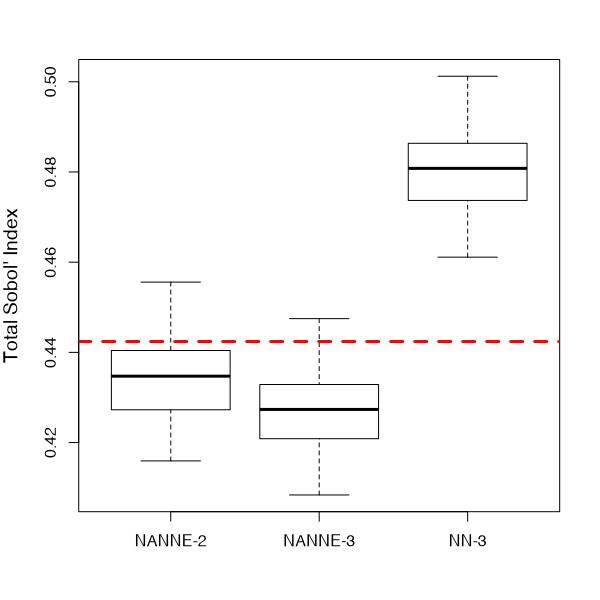}
        \caption{$X_{2}$}
    \end{subfigure}
    \begin{subfigure}{0.32\textwidth}
        \centering
        \includegraphics[width=\textwidth]{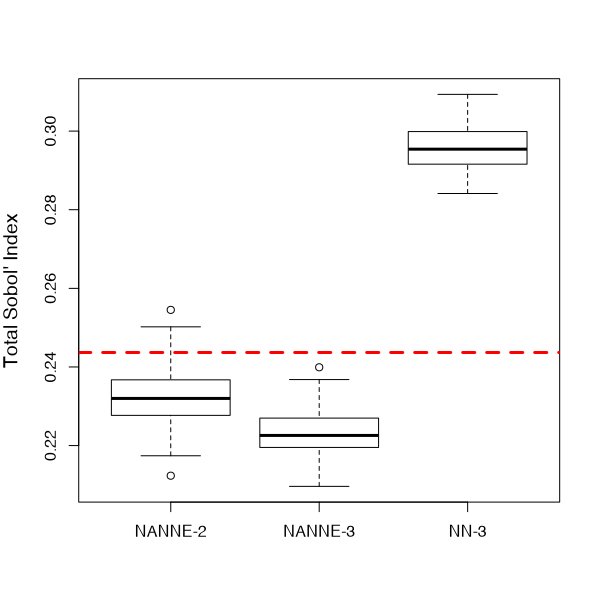}
        \caption{$X_{3}$}
    \end{subfigure}
    \caption{Comparison of \texttt{NANNE} estimator with $N_{I}=2$ (\texttt{NANNE}-2) and $N_{I}=3$ (\texttt{NANNE}-3) to the nearest-neighbor estimator \eqref{eq:nearest_neighbor} with $N_{I}=3$ (NN-3) on the noisy data simulated from the Ishigami function. The analytical total Sobol' indices are marked by the red dashed lines, and the boxplots summarize the estimations from 100 independent runs.}
    \label{fig:tsi_comp}
\end{figure}

Figure~\ref{fig:tsi_comp} compares \texttt{NANNE} and the nearest-neighbor \eqref{eq:nearest_neighbor} estimators on the noisy data $\{(\bm{x}^{(n)},y^{(n)})\}_{n=1}^{N}$ simulated from the Ishigami function \citep{ishigami1990importance}, 
\begin{align*}
    f^{*}(\bm{X}) = \sin(X_{1}) + 7\sin^2(X_{2}) + 0.1 X_{3}^{4}\sin(X_{1}), \quad X \sim\text{Uniform}[-\pi,\pi]^3.
\end{align*}
The random noise $\epsilon^{(n)}\sim\mathcal{N}(0,1)$. We run the simulation with $N=10{,}000$ samples and use all of them for the outer loop Monte Carlo designs $\{\bm{x}^{(m)}\}_{m=1}^{N_{O}}$, i.e., $N_{O}=N=10{,}000$. From Figure~\ref{fig:tsi_comp} it is evident that in the presence of noise, the nearest-neighbor estimator consistently overestimates the total Sobol' indices, whereas our proposed \texttt{NANNE} estimator is able to address this overestimation issue and provide an accurate estimate. For the nearest-neighbor estimator, \citet{broto2020variance} suggested using $N_{I}=3$ for the conditional variance estimation. However, from the simulation result in Figure~\ref{fig:tsi_comp} we find that for the \texttt{NANNE} estimator, $N_{I}=2$ yields a more robust estimation when the data is noisy. Hence, in practice for the regression problem we suggest using $N_{I}=2$ for the \texttt{NANNE} estimator. 

\begin{algorithm}[t!]
    \caption{Noise-Adjusted Nearest-NEighbor (\texttt{NANNE}) estimator.}
    \label{algo:nanne}
    \begin{algorithmic}[1]
        \State \textbf{Input:} (i) noisy samples $\{(\bm{x}^{(n)}\in\mathbb{R}^{p},y^{(n)}\in\mathbb{R})\}_{n=1}^{N}$, (ii) number of samples for inner loop Monte Carlo $N_{I}$, and (iii) number of samples for outer loop Monte Carlo $N_{O}$.
        \State Compute $\widehat{\text{Var}[Y]}$ via \eqref{eq:consistent_yvar}. 
        \State Compute $\hat{\tilde{T}}^{\text{nn}}_{\epsilon}$ via \eqref{eq:nearest_neighbor_epsilon}. 
        \State Compute $\widehat{\text{Var}[f^{*}(\bm{X})]} = \max\left(\widehat{\text{Var}[Y]}-\hat{\tilde{T}}^{\text{nn}}_{\epsilon},0\right)$.
        \If{$\widehat{\text{Var}[f^{*}(\bm{X})]}>0$}
            \For{$i=1,\ldots,p$}
                \State Compute $\hat{\tilde{T}}^{\text{nn}}_{i,\epsilon}$ via \eqref{eq:nearest_neighbor_i_epsilon}.
                \State Compute $\hat{T}_{i} = \max\left(\hat{\tilde{T}}^{\text{nn}}_{i,\epsilon}-\hat{\tilde{T}}^{\text{nn}}_{\epsilon},0\right)$.
                \State Compute $\hat{S}_{i}^{\text{tot}}=\hat{T}_{i}/\widehat{\text{Var}[f^{*}(\bm{X})]}$.
            \EndFor
        \Else
            \State Let $\hat{S}_{i}^{\text{tot}} = 0$ for $i=1,\ldots,p$.
        \EndIf
        \State \textbf{Output:} the total Sobol' indices $\{\hat{S}_{i}^{\text{tot}}\}_{i=1}^{p}$.
    \end{algorithmic}
\end{algorithm}

\section{Rethinking Factor Importance}
\label{sec:rethinking_factor_importance}

Recall from Subsection~\ref{subsec:factor_importance}, the factor importance $\psi_{u}$ of $\bm{X}_{u}$ is defined relative to all the input $\bm{X}$. The natural question that follows is \emph{what should be this full input $\bm{X}$}? For analysis of computer experiments where the model $f^{*}$ is known, the answer is straightforward, i.e. $\bm{X}$ should be all the inputs required by $f^{*}$. However, for the usual statistical modeling problem where we only have noisy data $\{(\bm{x}^{(n)},y^{(n)})\}_{n=1}^{N}$, the answer becomes ambiguous: should $\bm{X}$ be (i) all the variables that are collected or (ii) only the variables that the \emph{unknown} underlying model $f^{*}$ depends on? To answer this question, let us first consider the following linear Gaussian example,
\begin{align*}
    f^{*}(\bm{X}) = X_{1} + X_{2} + 0 X_{3},
\end{align*}
where $\bm{X}\sim\mathcal{N}(0,\bm{\Sigma})$ with $\text{Var}(X_{i})=1$ for all $i$, $\text{Cov}(X_{1},X_{2}) = \text{Cov}(X_{1},X_{3}) = 0$, and $\text{Cov}(X_{2},X_{3})=0.9$, i.e., $X_{2}$ and $X_{3}$ are strongly correlated while both are independent of $X_{1}$. Analytically solving for the factor importance relative to all input $\bm{X}=[X_{1},X_{2},X_{3}]$, we have $\psi_{1}^{(1)}=0.500$ and $\psi_{2}^{(1)}=0.095$. On the other hand, if the factor importance is computed relative to only the factors $[X_{1},X_{2}]$ that are influential in the model, we have $\psi_{1}^{(2)}=0.500$ and $\psi_{2}^{(2)}=0.500$. Clearly, the latter importance measure is evidently more in line with our expectation. The reason that $\psi_{2}^{(1)}$ is significantly smaller than $\psi_{2}^{(2)}$ is that $X_{3}$ serves as a good proxy for $X_{2}$ due to their strong correlation, though $X_{3}$ itself does not play a role in the true model. Nevertheless, if $X_{2}$ is collected as one of the predictors, we want neither $X_{3}$ show up as important nor the importance of $X_{2}$ depend on $X_{3}$. Hence, given data $\{(\bm{x}^{(n)},y^{(n)})\}_{n=1}^{N}$, \emph{factor selection precedes factor importance, and the importance should be measured relative to only factors that are relevant to the true model}. By factor selection, we imply filtering out variables that are not part of the true model, including those that could be correlated with the output but only through their dependency with the true model variables, e.g., $X_{3}$ in the linear Gaussian example. 

From the equivalence of the total Sobol' indices and the intrinsic factor importance, a factor is not important if its total Sobol' index is zero, and it should be removed from the list of inputs for the factor importance computation. Hence, a natural extension of the \texttt{NANNE} estimator to accommodate factor selection and importance simultaneously is by backward elimination. The key idea is to start with all the available factors, next compute the total Sobol' indices via \texttt{NANNE} and filter out factors with the total Sobol' index value of zero, and then re-estimate the total Sobol' indices using only the remaining variables. The above procedure is repeated until the total Sobol' indices are all positive for the surviving factors. We refer this as \texttt{NANNE-BE}, \texttt{NANNE} with Backward Elimination that provides the factor importance estimation after discarding the irrelevant variables. Please see Algorithm~\ref{algo:nanne-be} for the detailed procedure.   

\begin{algorithm}[t!]
    \caption{\texttt{NANNE-BE}: \texttt{NANNE} after Backward Elimination.}
    \label{algo:nanne-be}
    \begin{algorithmic}[1]
        \State \textbf{Input:} (i) noisy samples $\{(\bm{x}^{(n)}\in\mathbb{R}^{p},y^{(n)}\in\mathbb{R})\}_{n=1}^{N}$, (ii) number of samples for inner loop Monte Carlo $N_{I}$, and (iii) number of samples for outer loop Monte Carlo $N_{O}$.
        \State \textbf{Initialization:} $A\gets 1\!:\!p$, $\hat{S}_{i}^{\text{tot}}\gets 1/p$ for $i\in A$;
        \Do
            \State $A \gets \{i\in A: \hat{S}_{i}^{\text{tot}} > 0\}$;
            \State $\{\hat{S}_{i}^{\text{tot}}\}_{i\in A} \gets \texttt{NANNE}(\{(\bm{x}_{A}^{(n)},y^{(n)})\}_{n=1}^{N}, N_{I}, N_{O})$;
        \doWhile{$|A| > 0$ \textbf{and} $\min_{i\in A}\hat{S}_{i}^{\text{tot}} = 0$};
        \State $\hat{S}_{i}^{\text{tot}} \gets 0$ for $i \in 1\!:\!p\backslash A$.
        \State \textbf{Output:} the total Sobol' indices $\{\hat{S}_{i}^{\text{tot}}\}_{i=1}^{p}$ after factor selection. 
    \end{algorithmic}
\end{algorithm}

However, since \texttt{NANNE} estimator is developed upon the nearest-neighbor estimator, it also suffers from the \emph{curse of dimensionality}: all the points are nearly in equal distance from one another when the dimension $p$ is large. This is especially problematic when many of the factors are irrelevant for the target prediction, causing very noisy estimation and hence undermining the robustness of \texttt{NANNE} and \texttt{NANNE-BE}. To address the curse of dimensionality issue, we propose to first choose a promising set of candidates by greedy forward selection to reduce the dimensionality, and then compute the importance exclusively on the candidate factors. For the factor that is not in the candidate set, zero importance is assigned. Similar to the standard forward selection procedures \citep{efroymson1960multiple}, we start from the empty set and add one factor at a time. The selection criterion we consider is to maximize the oracle predictive power, i.e., the maximal possible variance that can be explained from the selected factors. In the context of Sobol' effect, this is the sum of the main effects and all the possible interaction effects of the selected factors. Let $u$ denote the set of indices for the selected factors, the variance of $f^{*}(\bm{X})$ that can be explained is 
\begin{align*}
    V_{u} = \sum_{v\in\mathcal{P}(u)} \text{Cov}(f^{*}_{v}(X_{v}),f^{*}(\bm{X})) = \text{Var}[\mathbb{E}\{f^{*}(\bm{X})|\bm{X}_{u}\}].
\end{align*} 
Moreover, let $\tilde{V}_{u}$ be the corresponding effect of $\bm{X}_{u}$ with respect to the noisy model \eqref{eq:noisy_output} $Y=\tilde{f}(\bm{X},\epsilon)=f^{*}(\bm{X})+\epsilon$. Since $\epsilon$ is independent of $\bm{X}$, we have
\begin{align*}
    \tilde{V}_{u} = \text{Var}[\mathbb{E}\{\tilde{f}(\bm{X},\epsilon)|\bm{X}_{u}\}] = \text{Var}[\mathbb{E}\{f^{*}(\bm{X})+\epsilon|\bm{X}_{u}\}] = \text{Var}[\mathbb{E}\{f^{*}(\bm{X})|\bm{X}_{u}\}] = V_{u}.
\end{align*}
By the variance decomposition formula, 
\begin{align*}
    V_{u} = \tilde{V}_{u} = \text{Var}[Y] - \mathbb{E}[\text{Var}\{\tilde{f}(\bm{X})|\bm{X}_{u}\}] = \text{Var}[Y] - \tilde{T}_{-u},
\end{align*}
where $\tilde{T}_{-u}$ is the total Sobol' effect \eqref{eq:total_sobol_effect} of $\bm{X}_{-u}$ with respect to the noisy model $\tilde{f}$. Given noisy data $\{(\bm{x}^{(n)},y^{(n)})\}_{n=1}^{N}$, the nearest-neighbor estimator of $\tilde{T}_{-u}$ is 
\begin{align*}
    \hat{\tilde{T}}_{-u} = \frac{1}{N_{O}}\sum_{m=1}^{N_{O}}\left\{\frac{1}{N_{I}-1}\sum_{j=1}^{N_{I}}\left(y^{(k_{u}^{(m)}(j))}-\frac{1}{N_{I}}\sum_{l=1}^{N_{I}}y^{(k_{u}^{(m)}(l))}\right)^2\right\},
\end{align*} 
where $k_{u}^{(m)}(j)$ is the index for the $j$-th nearest-neighbor to $\bm{x}_{u}^{(m)}$ in $\{\bm{x}^{(n)}\}_{n=1}^{N}$, with $\{\bm{x}_{u}^{(m)}\}_{m=1}^{N_{O}}$ be either all the samples or a random subsample of $\{\bm{x}_{u}^{(n)}\}_{n=1}^{N}$. It follows that a consistent estimator for $V_{u}$ is 
\begin{align}
    \label{eq:consistent_vare}
    \hat{V}_{u} = \widehat{\text{Var}[Y]} - \hat{\tilde{T}}_{-u},
\end{align}
where $\widehat{\text{Var}[Y]}$ is a consistent estimator of $\text{Var}[Y]$ from \eqref{eq:consistent_yvar}. The forward selection terminates when no variable can further improve the variance that can be explained. Given that the greedy forward selection could sometimes include a few irrelevant factors, we apply \texttt{NANNE-BE} (backward elimination) to filter them before computing the importance measure $\{\hat{\psi}_{i}\}_{i=1}^{p}$. The detail procedure is presented in Algorithm~\ref{algo:first}, and we name it \texttt{FIRST} which stands for Factor Importance Ranking and Selection using Total indices. Compared to the existing stepwise procedures, \texttt{FIRST} is independent of learning algorithms, i.e., all the estimation are done directly from data, and hence not only avoiding the computational burden of model fitting but more importantly the risk of model misspecification. 

\begin{algorithm}[t!]
    \caption{\texttt{FIRST}: forward selection followed by \texttt{NANNE-BE}.}
    \label{algo:first} 
    \begin{algorithmic}[1]
        \State \textbf{Input:} (i) noisy samples $\{(\bm{x}^{(n)}\in\mathbb{R}^{p},y^{(n)}\in\mathbb{R})\}_{n=1}^{N}$, (ii) number of samples for inner loop Monte Carlo $N_{I}$, and (iii) number of samples for outer loop Monte Carlo $N_{O}$.
        \State \textbf{Initialization:} $k=-1$, $A \gets \emptyset$, $\{i_{0}\}=\emptyset$, $\hat{V}_{\emptyset}=0$; 
        \Do 
            \State $k \gets k + 1$;
            \State $A \gets A \cup \{i_{k}\}$;
            \State Choose $i_{k+1}\in 1\!:\!p\backslash A$ such that $\hat{V}_{A\cup\{i_{k+1}\}}$ \eqref{eq:consistent_vare} is maximized, i.e., 
            \begin{align*}
                i_{k+1} = \arg\max_{i\in 1:p\backslash A}\hat{V}_{A\cup\{i\}};
            \end{align*}
        \doWhile{$|A| < p$ \textbf{and} $\hat{V}_{A\cup\{i_{k+1}\}} > \hat{V}_{A}$};
        \State $\{\hat{S}_{i}^{\text{tot}}\}_{i\in A} \gets \texttt{NANNE-BE}(\{(\bm{x}_{A}^{(n)},y^{(n)})\}_{n=1}^{N},N_{I},N_{O})$; 
        \State $\hat{\psi}_{i} \gets \hat{S}_{i}^{\text{tot}}$ for $i\in A$ and $\hat{\psi}_{i} \gets 0$ for $i\in 1\!:\!p\backslash A$;
        \State \textbf{Output:} the factor importance $\{\hat{\psi}_{i}\}_{i=1}^{p}$.
    \end{algorithmic}
\end{algorithm}

\section{Simulations on Regression Problems}
\label{sec:simulations_on_regression_problems}

We now compare the performance of our proposed \texttt{FIRST}\footnote{The implementation of \texttt{FIRST} is available in the R package \href{https://cran.r-project.org/web/packages/first/index.html}{\texttt{first}}.} algorithm (Algorithm~\ref{algo:first}) to a comprehensive list of factor importance/selection procedures in various regression settings. The following models are considered:
\begin{itemize}
    \item Rescaled version of the Ishigami function \citep{ishigami1990importance}:
    \begin{align}
        \label{eq:ishigami}
        f^{*}(\bm{X}) = \sin(2\pi X_{1} -\pi) + 7\sin^2(2\pi X_{2} -\pi) + 0.1 (2\pi X_{3} - \pi)^{4}\sin(2\pi X_{1} -\pi).
    \end{align}
    \item Modified version of the heavy-tailed nonlinear function \citep{huang2022kernel}:
    \begin{align}
        \label{eq:heavy-tailed}
        f^{*}(\bm{X}) = \frac{2\log(X_{1}^2+X_{2}^{4})}{\cos(X_{1})+\sin(X_{3})} + \frac{X_{2}^2\exp(X_{3})}{\sqrt{1.1-X_{6}}}.
    \end{align}
    \item Modified version of the Friedman function \citep{friedman1991multivariate}:
    \begin{align}
        \label{eq:friedman}
        f^{*}(\bm{X}) = 10\sin(\pi X_{1}X_{7}) + 20(X_{8}-0.5)^2 + 10X_{9} + 5X_{10} - 20X_{9}X_{10} - 10.
    \end{align}
\end{itemize}

\paragraph{Data Generation Mechanism} To account for the correlated input scenario, we use Gaussian copula \citep{nelsen2006introduction,kucherenko2012estimation} to model this dependency, 
\begin{align*}
    \bm{X} \sim C(G_{1}(X_{1}),\cdots,G_{p}(X_{p}); \bm{\Sigma}_{\bm{X}}) = \Phi_{p}(\Phi^{-1}(G_{1}(X_{1})),\cdots,\Phi^{-1}(G_{p}(X_{p})); \bm{\Sigma}),
\end{align*}
where $G_{i}$'s are the univariate marginal cumulative distributions, $C(\cdot;\bm{\Sigma}_{\bm{X}})$ is the Gaussian copula with correlation matrix $\bm{\Sigma}_{\bm{X}}$, $\Phi_{p}$ is the $p$-variate cumulative Gaussian distribution function with covariance matrix $\bm{\Sigma}$, and $\Phi^{-1}$ is the inverse univariate cumulative Gaussian distribution function. Note that there is a mapping from $\bm{\Sigma}_{\mathcal{X}}$ to $\bm{\Sigma}$, but for our simulation study purpose, we focus on modeling $\bm{\Sigma}$ by letting $\Sigma_{ij}=\rho^{|i-j|}$ for some $0\leq \rho \leq 1$ to control the level of correlation between the inputs. To sample $\bm{X}$, we first sample $\bm{Z}\sim\mathcal{N}(0,\bm{\Sigma})$ then apply transformation $X_{i}=G^{-1}_{i}(\Phi(Z_{i}))$ for each dimension. Uniform distribution over $[0,1]$ is used for all the marginal $G_{i}$'s, e.g., $\bm{X}$ follows the uniform distribution over the unit hypercube $[0,1]^{p}$ when $\bm{\Sigma}=\bm{I}_{p}$. To obtain the noisy output, we add a random Gaussian noise $\epsilon$ that is independent of $\bm{X}$, i.e., 
\begin{align*}
    Y = f^{*}(\bm{X}) + \epsilon, \quad \epsilon\sim\mathcal{N}(0,1).
\end{align*}
For all simulations in this section, we generate $N=1,000$ noisy samples. 

\paragraph{Groundtruth Factor Importance} Given its equivalence to the total Sobol' indices, the groundtruth factor importance measures $\{\psi_{i}^{*}\}_{i=1}^{p}$ can be computed using the Double Monte Carlo estimator \eqref{eq:double_monte_carlo} with $N_{I}=2$ and $N_{O}=100{,}000$. The Double Monte Carlo estimator requires sampling from the conditional distribution $X_{i}|\bm{X}_{-i}=\bm{x}_{-i}$, which can be done by first computing $z_{j}=\Phi^{-1}(G_{j}(x_{j}))$ for $j\in -i$, then sampling $Z_{i}$ conditional on $\bm{Z}_{-i}=\bm{z}_{-i}$, which is a conditional Gaussian distribution, and last applying transformation $X_{i}=G^{-1}_{i}(\Phi(Z_{i}))$. Recall from the discussion in Section~\ref{sec:rethinking_factor_importance}, the groundtruth factor importance is measured relative only to those variables that are presented in the true model. 

\subsection{Low Dimensional Setting}
\label{subsec:regression_low_dimensional_setting}

For factor importance we compare \texttt{FIRST} (Algorithm~\ref{algo:first}) to the following competitors:
\begin{itemize}
    \item \texttt{VIMP-R2}: sample-splitted, cross-fitted estimator \citep{williamson2023general} of the $R^{2}$-based intrinsic factor importance $\psi_{i}$ \eqref{eq:factor_importance} implemented in the R package \texttt{vimp} \citep{williamson2023vimp}. For estimating the oracle functions $f^{*}$ and $f^{*}_{-i}$ in \eqref{eq:factor_importance}, we use the random forest implemented in the R package \texttt{ranger} \citep{wright2017ranger}. For all other configurations, we follow the default values recommended in the package.
    \item \texttt{FILTER}: ``filterVarImp" implemented in the R package \texttt{caret} \citep{kuhn2020package}, which assess the importance of each factor individually by the $R^2$ of any univariate nonparametric model such as loess smoother fitted between the output and the factor.
    \item \texttt{LASSO}: Lasso \citep{tibshirani1996regression} implemented in the R package \texttt{glmnet} \citep{friedman2010glmnet}. The regularization parameters are chosen by ``lambda.1se". The factor importance is evaluated by the square of the coefficients. 
    \item \texttt{COSSO}: COSSO \citep{lin2006component} implemented in the R package \texttt{cosso} \citep{zhang2023cosso}. The regularization parameters are chosen by the default tuning procedure provided in the package. The factor importance is evaluated by the variance of each component, i.e., suppose $\hat{f}(\bm{X})=\sum_{i=1}^{p}\hat{f}_{i}(X_{i})$ is the additive model learned by COSSO, the importance of $X_{i}$ is characterized by $\text{Var}[\hat{f}_{i}(X_{i})]$.  
    \item \texttt{RF}: random forest \citep{breiman2001random} implemented in the R package \texttt{ranger} \citep{wright2017ranger}. The factor importance is assessed by the mean decrease in the impurity, which is mean squared error in the regression setting. 
    \item \texttt{RF-SHAP}: SHAP (SHapley Additive exPlanations) factor importance \citep{lundberg2017unified} based on the random forest. The Shapley value of each data instance is computed using the R package \texttt{fastshap} \citep{greenwell2021fastshap}. The factor importance is evaluated by the mean absolute Shapley value across the data \citep{molnar2020interpretable}.
    \item \texttt{RF-VS}: random forest with variable selection \citep{genuer2010variable} implemented in the R package \texttt{VSURF} \citep{genuer2015vsurf}. Random forest are fitted only on the selected variables obtained from the "interpretation step". The importance is defined by the mean decrease in the impurity for the selected variables, and 0 otherwise.  
\end{itemize}

\begin{figure}[t!]
    \centering
    \includegraphics[width=\textwidth]{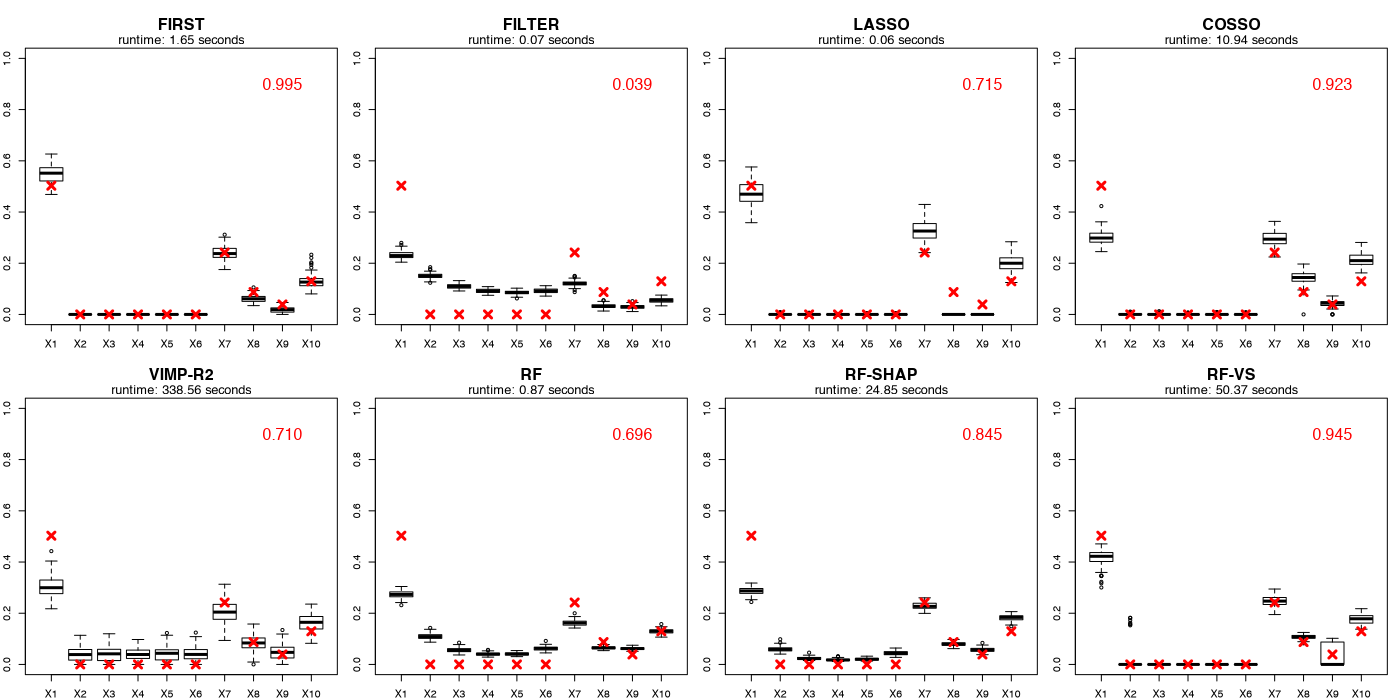}
    \caption{Comparison of \texttt{FIRST} to other factor importance procedures on the Friedman function \eqref{eq:friedman} with dimension $p=10$ and correlation $\rho=0.8$. For a better visual comparison, the importance measures are normalized to have a sum of one. For each procedure, 100 independent runs are performed, and the results are summarized by the boxplots, with the groundtruth marked by the red crosses. The average Kendall rank correlation coefficients \eqref{eq:kendall_tau}, the higher the better, are shown in the top right corner.}
    \label{fig:friedman_08}
\end{figure}

For \texttt{FIRST}, we utilize $N_{I}=2$ nearest-neighbor for estimating the conditional variance. Figure~\ref{fig:friedman_08} compares their performance on the $p=10$ dimensional Friedman function \eqref{eq:friedman} with input correlation $\rho=0.8$. Our proposed \texttt{FIRST} procedure visually aligns the best with the groundtruth importance (red crosses), while some competitors, e.g. \texttt{FILTER}, \texttt{RF}, and \texttt{RF-SHAP}, struggle from the correlation in the input. The simple linear model \texttt{LASSO} overlooks the subtle nonlinear effect. \texttt{COSSO} and \texttt{RF-VS} are able to provide reasonable estimations, but at the computational costs that are about 6 and 30 times that of \texttt{FIRST}, respectively. Similar to \texttt{FIRST}, \texttt{VIMP-R2} also aims to directly estimate intrinsic factor importance $\psi_{i}$ \eqref{eq:factor_importance} but through building estimators for the oracle functions $f^{*}$ and $f_{-i}^{*}$. From Figure~\ref{fig:friedman_08} we can see that \texttt{VIMP-R2} (i) is much more computationally expensive and (ii) has difficulty in filtering out variables that are deemed unimportant, but it does provide good importance estimation for the last four variables. To quantify the performance comparison, we consider the Kendall rank correlation coefficient $\tau$ between the groundtruth $\{\psi_{i}\}_{i=1}^{p}$ and the estimation $\{\hat{\psi}_{i}\}_{i=1}^{p}$, 
\begin{align}
    \label{eq:kendall_tau}
    \tau = \frac{2}{p(p-1)}\sum_{i < j}\text{sign}(\psi_{i}-\psi_{j})\times \text{sign}(\hat{\psi}_{i}-\hat{\psi}_{j}),
\end{align}
with value of 1 indicated perfect agreement between the two rankings, and -1 for perfect disagreement. This metric is chosen because for factor importance, the emphasis is on the ranking rather than the exact value. This also offers a fair comparison among different methods, since some captures the importance in squared error while other captures in absolute error. Again from Figure~\ref{fig:friedman_08}, \texttt{FIRST} stands out as the clear winner that best correlates in ranking with the groundtruth. 

\begin{figure}[!t]
    \centering
    \begin{subfigure}{0.32\textwidth}
        \centering
        \includegraphics[width=\textwidth]{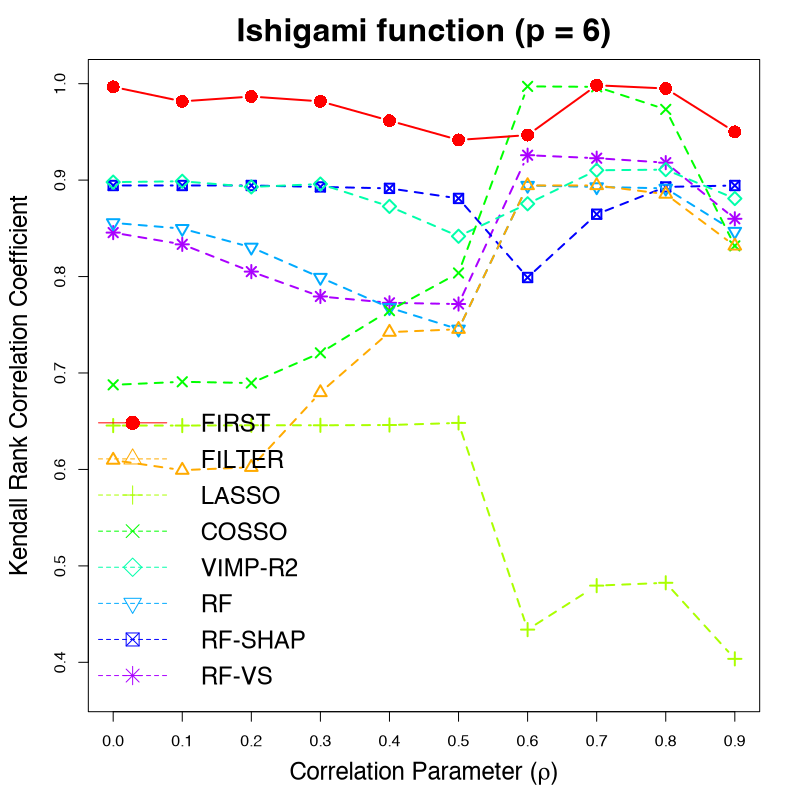}
        \caption{Ishigami \eqref{eq:ishigami}}
    \end{subfigure}
    \begin{subfigure}{0.32\textwidth}
        \centering
        \includegraphics[width=\textwidth]{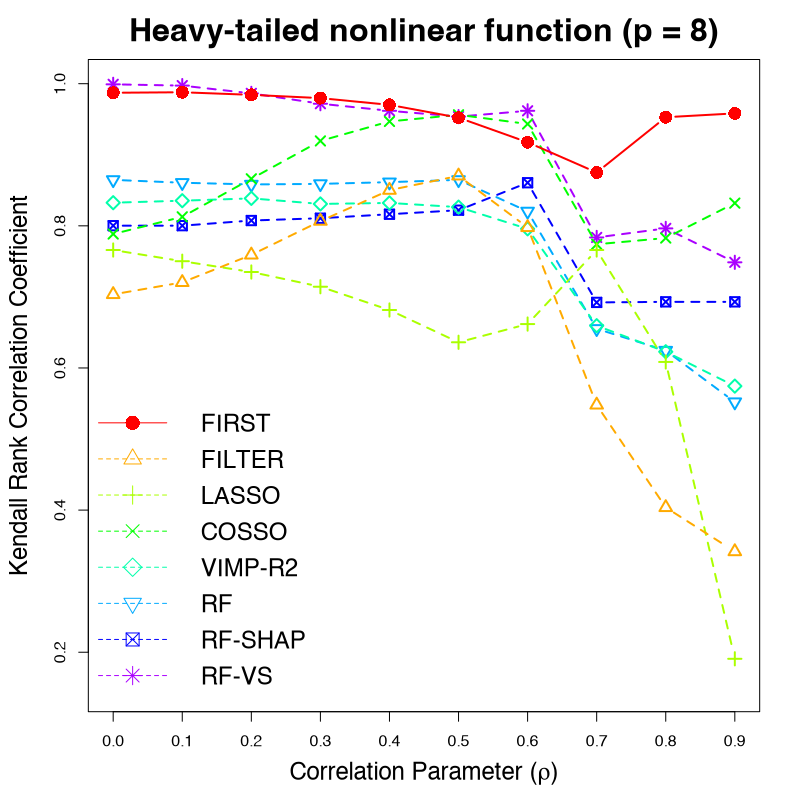}
        \caption{Heavy-tailed \eqref{eq:heavy-tailed}}
    \end{subfigure}
    \begin{subfigure}{0.32\textwidth}
        \centering
        \includegraphics[width=\textwidth]{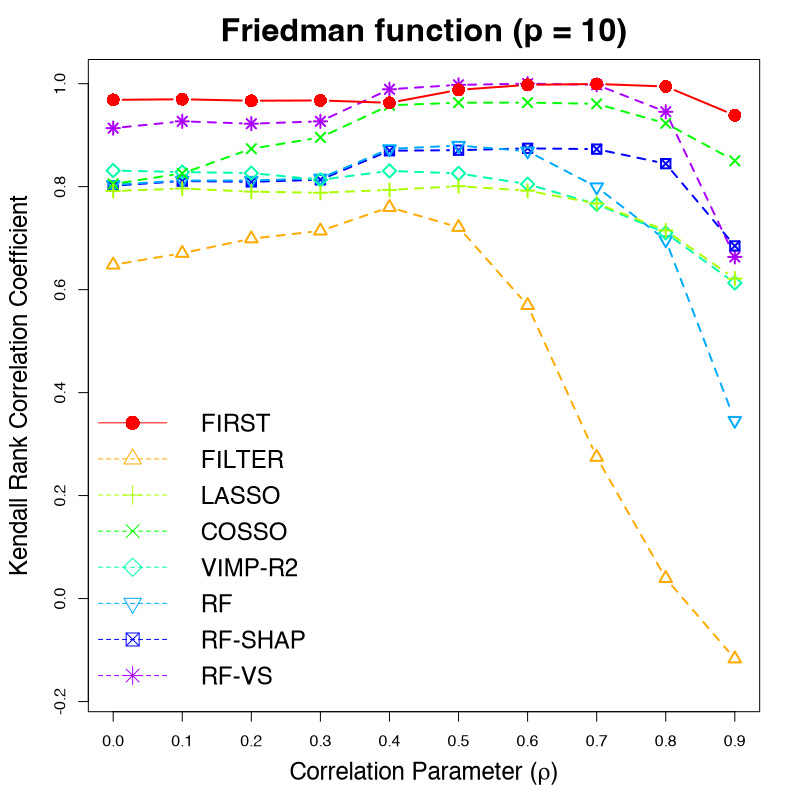}
        \caption{Friedman \eqref{eq:friedman}}
    \end{subfigure}
    \caption{Comparison of \texttt{FIRST} to other factor importance procedures on the Ishigami, Heavy-tailed, and Friedman function across different correlation ($\rho$) levels. The Kendall rank correlation coefficients \eqref{eq:kendall_tau} is averaged over 100 independent runs.}
    \label{fig:regression_importance}
\end{figure}

A more comprehensive comparison is provided in Figure~\ref{fig:regression_importance} for different correlation level on (i) Ishigami function \eqref{eq:ishigami} with $p=6$, (ii) heavy-tailed nonlinear function \eqref{eq:heavy-tailed} with $p=8$, and (iii) Friedman function \eqref{eq:friedman} with $p=10$. We can see that most of the time \texttt{FIRST} outperforms the other methods, and it is robust to the various input dependency level. \texttt{COSSO} and \texttt{RF-VS} occasionally take the lead, but their performances vary across different correlation level, especially on the Ishigami function. The other methods are evidently inferior and generally suffer from strong correlation presented in the input.  

\begin{figure}[!t]
    \centering
    \begin{subfigure}{0.32\textwidth}
        \centering
        \includegraphics[width=\textwidth]{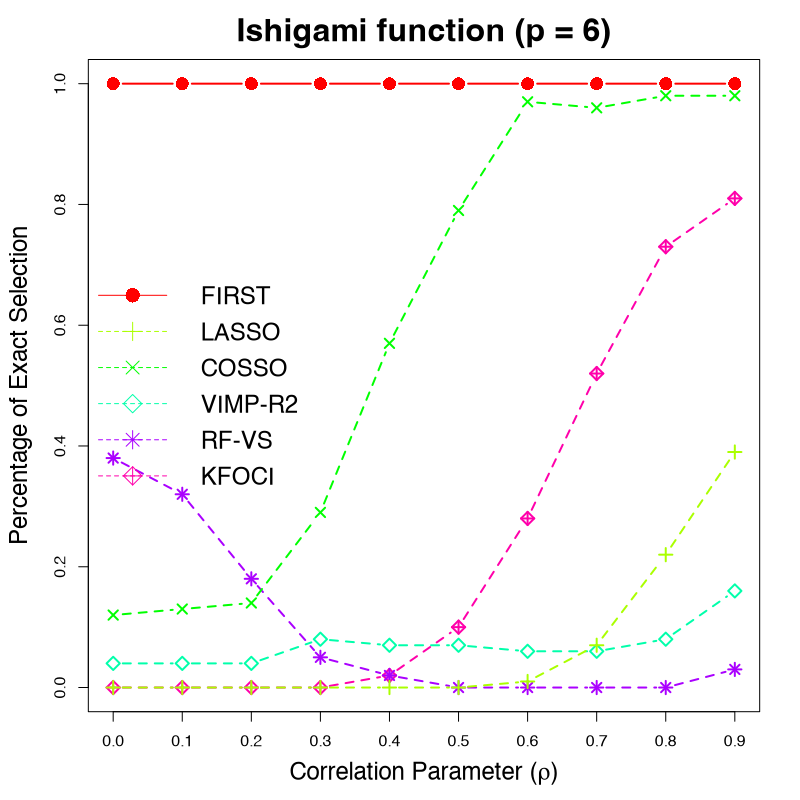}
        \caption{Ishigami \eqref{eq:ishigami}}
    \end{subfigure}
    \begin{subfigure}{0.32\textwidth}
        \centering
        \includegraphics[width=\textwidth]{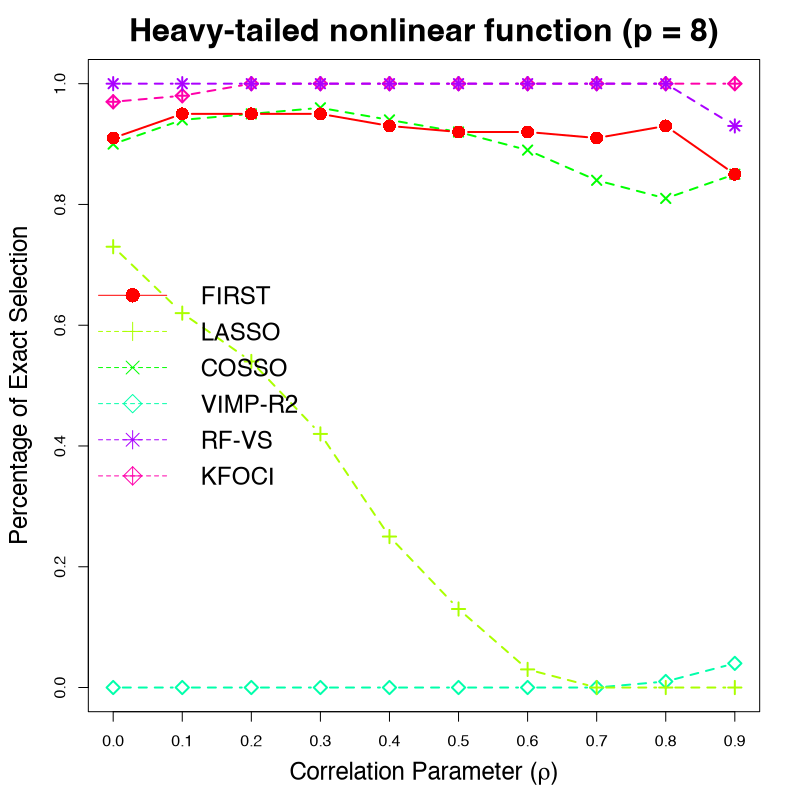}
        \caption{Heavy-tailed \eqref{eq:heavy-tailed}}
    \end{subfigure}
    \begin{subfigure}{0.32\textwidth}
        \centering
        \includegraphics[width=\textwidth]{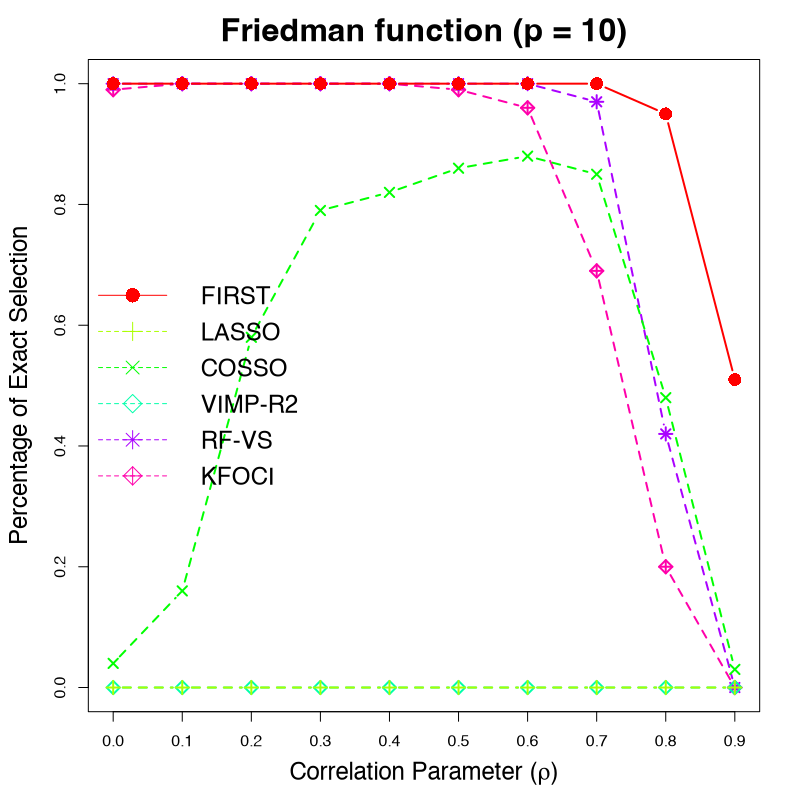}
        \caption{Friedman \eqref{eq:friedman}}
    \end{subfigure}
    \caption{Comparison of \texttt{FIRST} to other factor selection procedures on the Ishigami, Heavy-tailed, and Friedman function across different correlation levels. For each procedure, 100 independent runs are performed, and the percentage of times that the model variables are exactly selected is reported.}
    \label{fig:regression_selection}
\end{figure}

Now let us also examine the factor selection performance. Given that \texttt{FILTER}, \texttt{RF}, and \texttt{RF-SHAP} do not have a good criterion that automatically determines the number of variables to select, we omit them in this comparison. In addition to \texttt{LASSO}, \texttt{COSSO}, \texttt{VIMP-R2}, and \texttt{RF-VS}, we also compare \texttt{FIRST} to \texttt{KFOCI} (kernel feature ordering by conditional independence), a model-free forward stepwise variable selection procedure based on the kernel partial correlation \citep{huang2022kernel} implemented in the R package \texttt{KPC} \citep{huang2021kpc}. The default configurations suggested in the package are used for \texttt{KFOCI}. Figure~\ref{fig:regression_selection} reports the proportion of times that the true model variables are exactly selected, e.g., for Ishigami function \eqref{eq:ishigami} they are $\{X_{1},X_{2},X_{3}\}$. \texttt{FIRST} achieves the best possible performance on the Ishigami function, in particular, it selects $\{X_{1},X_{2},X_{3}\}$ exactly 100\% of the times for all correlation levels. For the Friedman function \eqref{eq:friedman}, \texttt{FIRST} remains the top performing method, especially when the inputs are strongly correlated ($\rho=0.9$): all the competitors fail completely while \texttt{FIRST} can still identify the true model variables exactly in about half of the times. For the Heavy-tailed nonlinear function \eqref{eq:heavy-tailed} though \texttt{FIRST} slightly falls short of \texttt{RF-VS} and \texttt{KFOCI}, its performance is still competitive, selecting the correct model variables at least 85\% of the cases. Overall \texttt{FIRST} stands out as the robust option against various levels of input correlation, and moreover it is computationally more efficient than its key competitors \texttt{COSSO}, \texttt{RF-VS} and \texttt{KFOCI}\footnote{For the Friedman function example with dimension $p=10$ and correlation $\rho=0.8$ in Figure~\ref{fig:friedman_08}, the average runtime of \texttt{KFOCI} is 70.15 seconds, which is 40 times greater than the runtime of \texttt{FIRST}.}.

\subsection{High Dimensional Setting}
\label{subsec:regression_high_dimensional_setting}

\begin{figure}[!t]
    \centering
    \begin{subfigure}{0.32\textwidth}
        \centering
        \includegraphics[width=\textwidth]{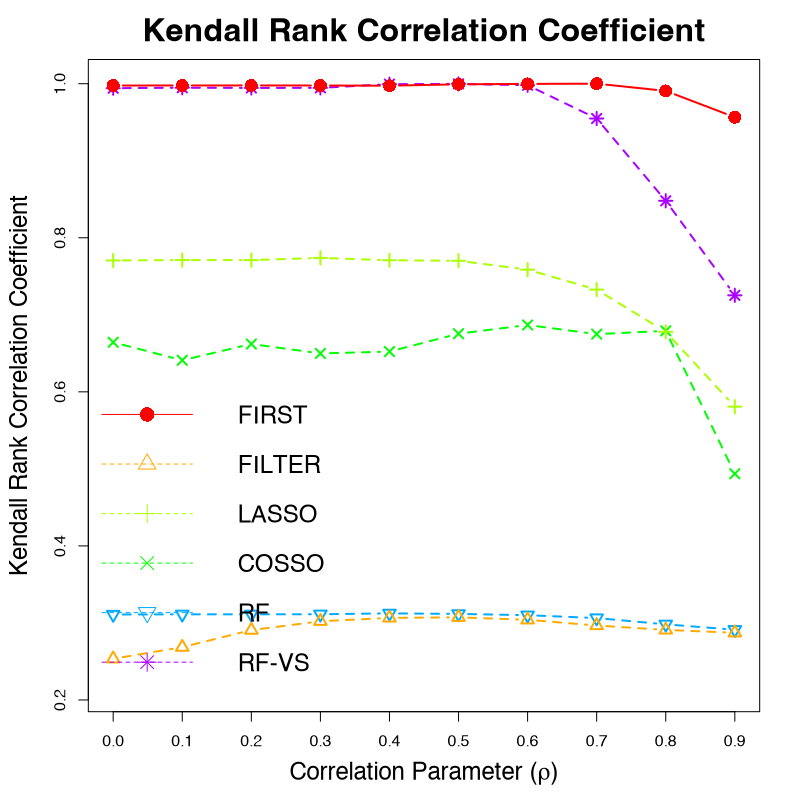}
    \end{subfigure}
    \begin{subfigure}{0.32\textwidth}
        \centering
        \includegraphics[width=\textwidth]{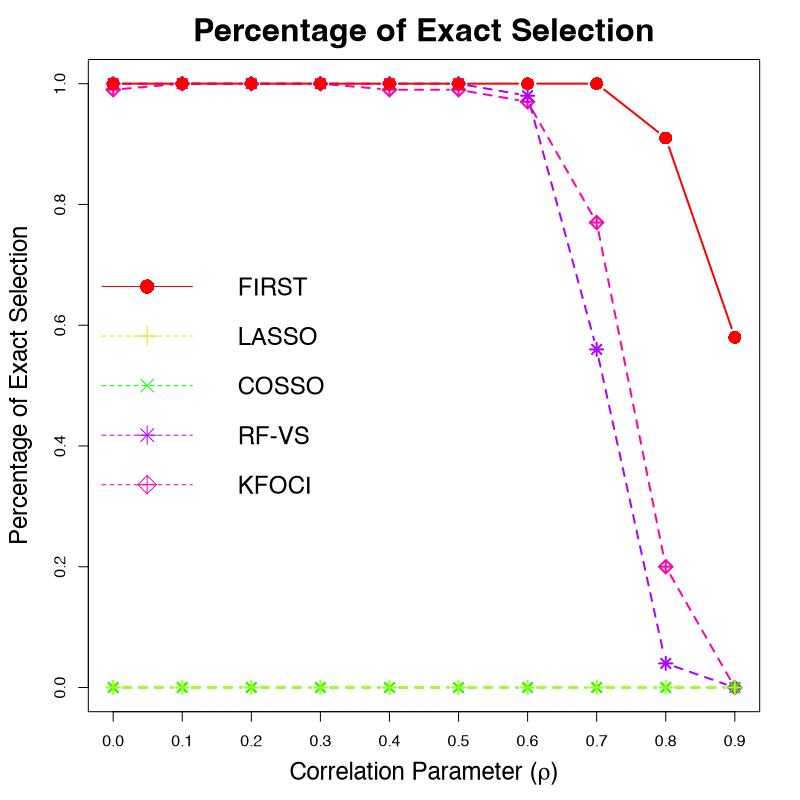}
    \end{subfigure}
    \begin{subfigure}{0.32\textwidth}
        \centering
        \includegraphics[width=\textwidth]{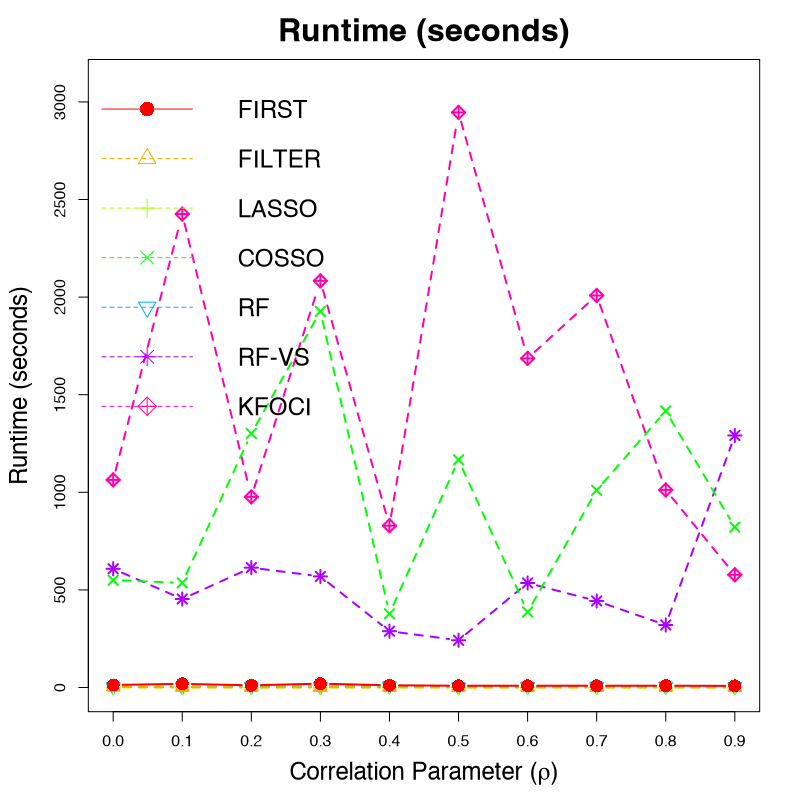}
    \end{subfigure}
    \caption{Comparison of \texttt{FIRST} to other factor importance / selection procedures on the $p=100$ dimensional Friedman \eqref{eq:friedman} function across different correlation levels. For each procedure, 100 independent runs are performed, and the following metrics are reported: average Kendall rank correlation coefficient \eqref{eq:kendall_tau} / the percentage of times that the model variables are exactly selected / the average runtime.}
    \label{fig:regression_hd}
\end{figure}

\begin{table}[t!]
    \centering
    \resizebox{\textwidth}{!}{
    \begin{tabular}{lcccccc}
        \toprule
        & \multicolumn{3}{c}{Ishigami \eqref{eq:ishigami}} & \multicolumn{3}{c}{Friedman \eqref{eq:friedman}} \\
        \cmidrule(lr){2-4}\cmidrule(lr){5-7}
        & $\rho=0.0$ & $\rho=0.5$ & $\rho=0.9$ & $\rho=0.0$ & $\rho=0.5$ & $\rho=0.9$  \\
        \midrule 
        $p=50$ & 1.00/1.00/6s & 0.99/1.00/6s & 1.00/1.00/6s & 0.99/1.00/10s & 1.00/1.00/9s & 0.96/0.67/5s \\
        $p=100$ & 1.00/1.00/7s & 1.00/1.00/7s & 1.00/1.00/7s & 1.00/1.00/13s & 1.00/1.00/9s & 0.96/0.58/8s \\
        $p=200$ & 1.00/1.00/24s & 1.00/1.00/24s & 1.00/1.00/24s & 1.00/1.00/22s & 1.00/1.00/25s & 0.96/0.62/21s \\
        $p=500$ & 1.00/1.00/59s & 1.00/1.00/58s & 1.00/1.00/58s & 1.00/0.99/55s & 1.00/1.00/103s & 0.95/0.60/84s \\
        $p=1000$ & 1.00/1.00/117s & 1.00/1.00/117s & 1.00/1.00/115s & 1.00/0.99/168s & 1.00/1.00/207s & 0.95/0.58/361s \\
        \bottomrule
    \end{tabular}
    }
    \caption{Performance of \texttt{FIRST} on the Ishigami and Friedman function with dimension $p=50,100,200,500,1000$. For each setting, 100 independent runs are performed, and the following metrics are reported: average Kendall rank correlation coefficient \eqref{eq:kendall_tau} / the percentage of times that the model variables are exactly selected / the average runtime.}
    \label{tab:regression_hd_first}
\end{table}

Figure~\ref{fig:regression_hd} compares the performance of \texttt{FIRST} to its competitors on the 100-dimensional Friedman function \eqref{eq:friedman}. Again only $X_{1},X_{7},X_{8},X_{9},X_{10}$ are important, and the rest 95 variables do not influence the output at all. We use $N=1{,}000$ noisy samples for the simulations. We can see that \texttt{FIRST} not only outperforms the other procedures on both factor importance and selection, but more importantly it is orders of magnitude faster over its key competitors \texttt{COSSO}, \texttt{RF-VS}, and \texttt{KFOCI}. Moreover, Table~\ref{tab:regression_hd_first} reports the simulation results of \texttt{FIRST} on $p=50,100,200,500,1000$. For the Ishigami \eqref{eq:ishigami} function, \texttt{FIRST} is able to (i) identify the true model parameters exactly and (ii) learn the importance ranking accurately on all dimensions. Similar findings are also observed for the Friedman \eqref{eq:friedman} function except on the extremely difficult strongly correlated case ($\rho=0.9$), yet the performance of \texttt{FIRST} does not degrade as the dimension increases. On the other hand, though the computational time of \texttt{FIRST} is reasonable for moderate ($p=200$) dimensional problem, it is getting computationally more expensive as the dimension further increases. To address this computational challenge, in Appendix~\ref{appendix:additional_simulation_results_on_regression} we provide a fast version of \texttt{FIRST} that borrows idea from the \emph{effect sparsity} principle \citep{wu2011experiments} to more efficiently filter out variables that are not important. Unfortunately it also comes with the trade-off of sacrificing some accuracy.  
 
\subsection{Real Data Example}
\label{subsec:regression_real_data_example}

In this subsection we study the performance of \texttt{FIRST} on some real world datasets. Let us start with the Abalone dataset \citep{nash1994population} that is available on the UCI Machine Learning repository \citep{asuncion2007uci}. This dataset consists of $N=4{,}177$ instances, and the objective is to predict the age of the abalone using the other eight physical measurements, including sex (male/female/infant), length (the longest shell measurement), diameter (perpendicular to length), height, whole weight, shucked weight (weight of the meat), viscera weight (weight of the gut after bleeding), and shell weight. Among these eight predictors, \texttt{FIRST} filters out height from the set of relevant variables for the age prediction. By fitting a random forest on the remaining seven variables, the out-of-bag prediction mean squared errors (OOB-MSE) is 4.59, achieving better results than the 4.61 OOB-MSE from the random forest fit on all eight predictors. Moreover, among the 256 possible subsets of the eight factors, the seven factors selected by \texttt{FIRST} yields the \emph{best random forest model} with the smallest OOB-MSE. Table~\ref{tab:abalone_importance} reports the factor importance estimated by \texttt{FIRST}, and we can see that the ranking of the importance measures is in good agreement with the predictive power proxy by the increment in OOB-MSE if the variable is dropped from the \emph{best random forest model}. This again demonstrate that the factor importance measure by \texttt{FIRST} nicely captures the predictive capability of the variables.  

\begin{table}[t!]
    \centering
    \resizebox{\textwidth}{!}{
    \begin{tabular}{lcc}
        \toprule
        Factor & \texttt{FIRST} Importance & Increment in OOB-MSE if Dropped \\
        \midrule 
        Sex (categorical) & 0.016 (6) & 0.098 (4) \\
        Length (continuous) & 0.012 (7) & 0.010 (7) \\
        Diameter (continuous) & 0.022 (4) & 0.020 (6) \\
        Height (continuous) & 0.000 (8) & 0.000 (8) \\
        Whole weight (continuous) & 0.040 (2) & 0.237 (2) \\
        Shucked weight (continuous) & 0.094 (1) & 0.597 (1) \\
        Viscera weight (continuous) & 0.019 (5) & 0.059 (5) \\
        Shell weight (continuous) & 0.031 (3) & 0.222 (3) \\
        \bottomrule
    \end{tabular}
    }
    \caption{The factor importance estimated by \texttt{FIRST} on the Abalone dataset. The increment in out-of-bag prediction mean squared errors (OOB-MSE) if the factor is dropped from the \emph{best random forest model} is also reported. The ranking is provided inside the parentheses.}
    \label{tab:abalone_importance}
\end{table}

Now we further study its factor selection performance on four other popular real data examples: Auto MPG data \citep{asuncion2007uci}, Boston Housing data \citep{harrison1978hedonic}, concrete compressive strength data \citep{yeh1998modeling}, and meat spectroscopy data \citep{thodberg1993ace}. For each dataset, we perform 80\%/20\% train/test split after filtering out the instances with missing values, next apply \texttt{FIRST} to select important factors from the train set, then fit a random forest on the train set using \texttt{FIRST} selected variables, and last evaluate it on the test set to obtain the test mean squared errors (MSE). We repeat this process for 100 times and record the 20\%, 50\%, and 80\% quantiles for (i) the number of selected factors and (ii) the ratio of test MSE to that from random forest fitting on the entire set of variables. From Table~\ref{tab:regression_real_comp} we can see that the test MSE ratio is generally around one, showing that the predictive power does not degrade much from using the variables selected by \texttt{FIRST}. Moreover, on the meat spectroscopy dataset, \texttt{FIRST} is able to reduce the problem dimension from 100 to around 4, while still maintaining a comparable prediction performance to the random forest fit with all 100 factors. 

\begin{table}[t!]
    \centering
    \resizebox{\textwidth}{!}{
    \begin{tabular}{lcc}
        \toprule
        Dataset & \# Selected Factors & Test MSE Ratio \\
        \midrule 
        Auto MPG ($N=392,\;p=7$) & 4[4,5] & 1.04[0.98,1.11] \\ 
        Boston Housing ($N=506,\; p=13$) & 8[6,9] & 1.02[0.93,1.14] \\
        Concrete Compressive Strength ($N=1030,\; p=8$) & 6[6,7] & 0.93[0.88,0.97] \\
        Meat Spectroscopy ($N=215,\;p=100$) & 4[2,6] & 0.93[0.77,1.07] \\
        \bottomrule
    \end{tabular}
    }
    \caption{The number of selected factors and the ratio between the test MSE obtained from the random forest fitted on the selected covariates and the entire set of covariates. The median [20\% quantile, 80\% quantile] are reported from 100 independent replications.}
    \label{tab:regression_real_comp}
\end{table} 

\section{Extension to Binary Classification Problems}
\label{sec:extension_to_binary_classification_problems}

In Section~\ref{sec:simulations_on_regression_problems} we have demonstrated the excellent performance of \texttt{FIRST} on the regression problems. A natural question is whether this can be extended for binary classification problems. To answer this question, we first present simulation results of \texttt{FIRST} (Algorithm~\ref{algo:first}) on both synthetic and real data binary classification problems, and then provide intuition on why \texttt{FIRST} can be applied directly without any modification. Different from regression that $N_{I}=2$ nearest-neighbor are sufficient for estimating the conditional variance, from our simulation we find that $N_{I}=3$ is more robust for the binary classification problem. For the results presented in this section, $N_{I}=3$ is used for \texttt{FIRST}.

\subsection{Synthetic Data Example}
\label{subsec:classification_synthetic_data_example}

Similar to the regression synthetic example, we model the input $X$ using Gaussian copula with marginals being uniform distributions and covariance matrix $\bm{\Sigma}$ defined by $\Sigma_{ij}=\rho^{|i-j|}$, where $0\leq\rho\leq 1$ controls the level of correlation. To obtain the binary outcome, we first convert the function output to a probability by the inverse-probit transformation, and then perform Bernoulli sampling using this probability, i.e., 
\begin{align*}
    Y = \text{Bernoulli}(\Phi(f^{*}(\bm{X}))),
\end{align*}
where $\Phi$ is the standard Gaussian cumulative distribution function. $N=1,000$ noisy binary output samples are generated for the simulations. 

\begin{figure}[!t]
    \centering
    \begin{subfigure}{\textwidth}
        \centering
        \includegraphics[width=0.32\linewidth]{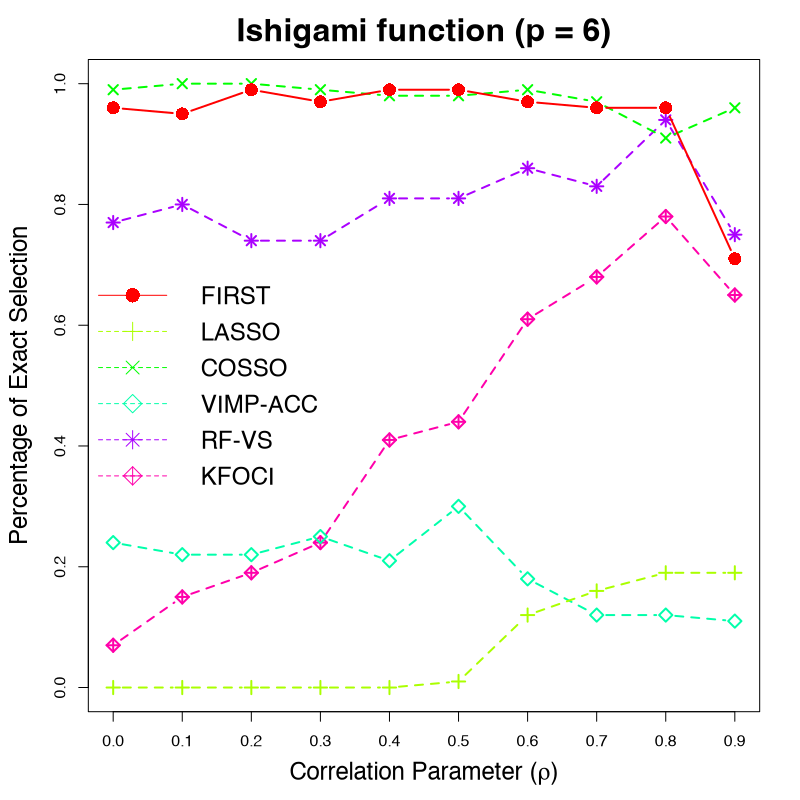}
        \hfill
        \includegraphics[width=0.32\linewidth]{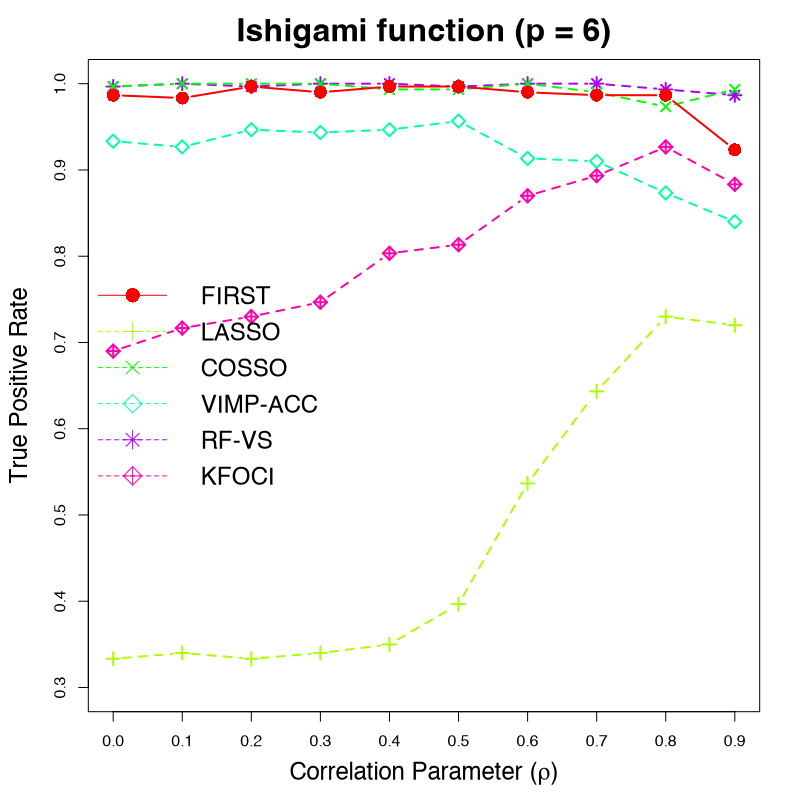}
        \hfill
        \includegraphics[width=0.32\linewidth]{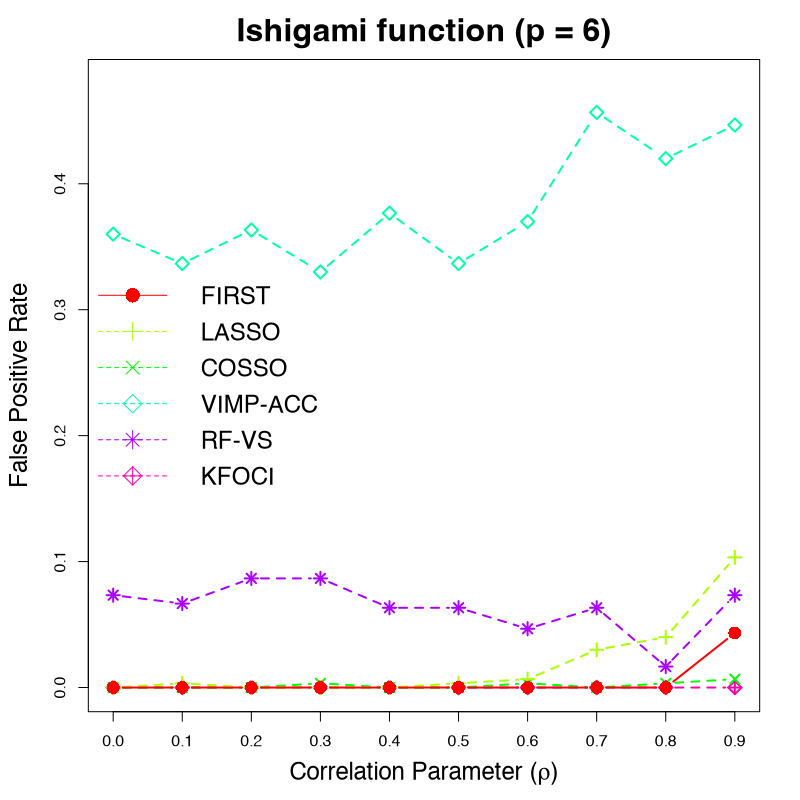}
        \caption{Ishigami \eqref{eq:ishigami}}
    \end{subfigure}
    
    \begin{subfigure}{\textwidth}
        \centering
        \includegraphics[width=0.32\linewidth]{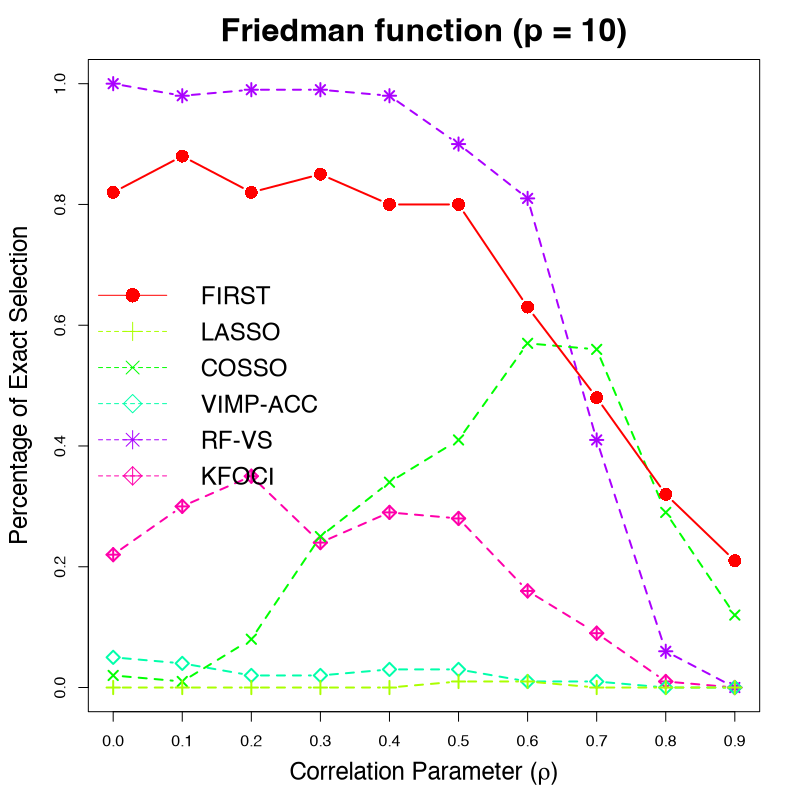}
        \hfill
        \includegraphics[width=0.32\linewidth]{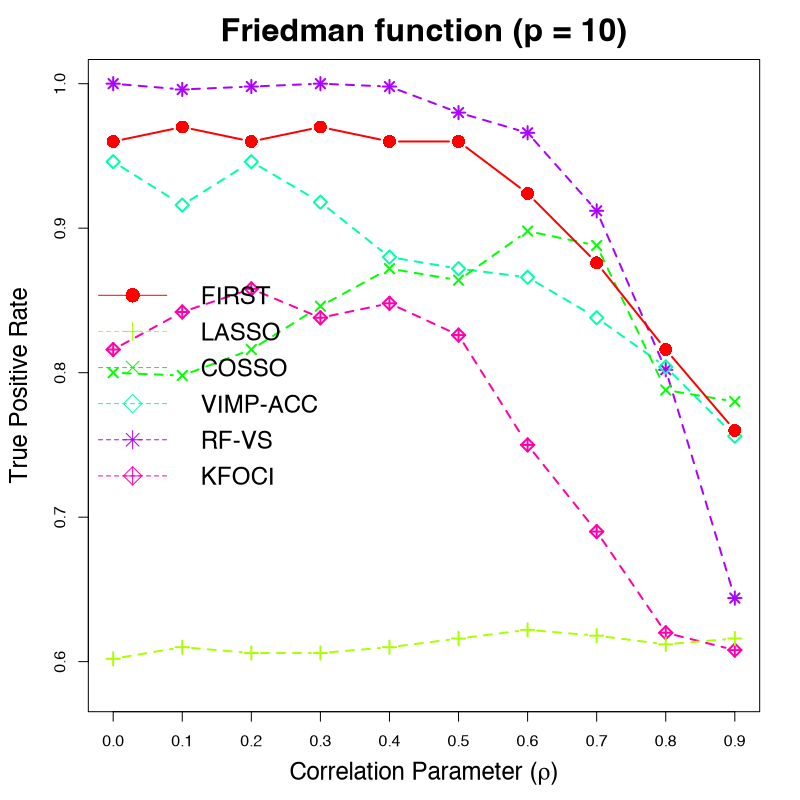}
        \hfill
        \includegraphics[width=0.32\linewidth]{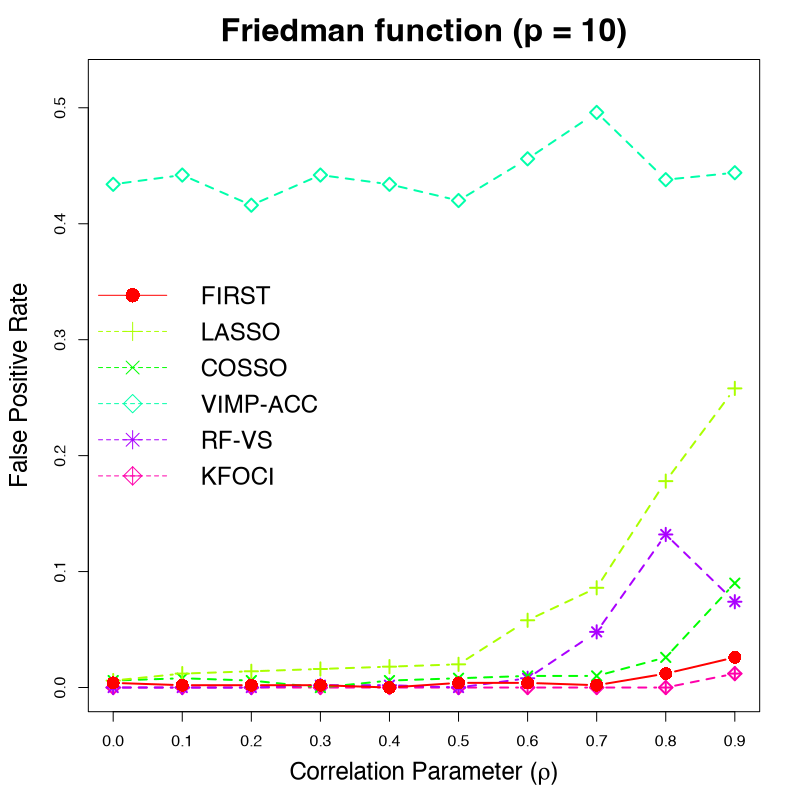}
        \caption{Friedman \eqref{eq:friedman}}
    \end{subfigure}
    
    \caption{Comparison of \texttt{FIRST} to other factor selection procedures on the Ishigami and the Friedman binary classification problem for different correlation ($\rho$) levels. For each procedure, 100 independent runs are performed, and the following metrics are reported: the percentage of times that the model variables are exactly selected / average true positive rate of the selected variables / average false positive rate of the selected variables.}
    \label{fig:classification_comp}
\end{figure}

For the binary classification problem, \citet{williamson2023general} suggest that (i) classification accuracy and (ii) area under the receiver operating characteristic curve (AUC) as the measures of predictiveness $V$ for computing the factor importance $\psi_{u}$ \eqref{eq:factor_importance}. The equivalence between the total Sobol' indices and the importance using $R^{2}$ predictiveness measure in the regression setting no longer holds here. Hence, there is no straightforward way to compute the groundtruth factor importance. We focus on the factor selection perspective for the performance evaluation. Three metrics are reported: (i) the proportion of times that the model variables are exactly selected, (ii) the true positive rate (TPR), the probability of success detection, of the selected variables, and (iii) the false positive rate (FPR), the probability of false alarm, of the selected variables. We compare \texttt{FIRST} to the following competitors: \texttt{LASSO}, \texttt{COSSO}, \texttt{RF-VS}, \texttt{VIMP-ACC}, and \texttt{KFOCI}. \texttt{VIMP-ACC} is the classification adaptation of the \texttt{VIMP-R2} considered in the regression comparison, which replaces the $R^{2}$ predictiveness measure by the classification accuracy\footnote{The simulations in \citet{williamson2023general} shows that using accuracy and AUC yield similar results.}. From Figure~\ref{fig:classification_comp} we can see that \texttt{FIRST} is among one of the top performers: \texttt{FIRST} is marginally less effective than \texttt{COSSO} on the Ishigami function \eqref{eq:ishigami}, and only outperformed by \texttt{RF-VS} on the Friedman function \eqref{eq:friedman}. \texttt{FIRST} achieves excellent false positive rate, demonstrating the reliability of the variables selected by \texttt{FIRST}. For the true positive rate, \texttt{FIRST} is not doing as good on the Friedman function, occasionally missing some true model variables. However, overall \texttt{FIRST} maintains a good balance between the true positive rate and the false positive rate compared to its competitors.  

\subsection{Real Data Example}
\label{subsec:classification_real_data_example}

\begin{table}[t!]
    \centering
    \resizebox{\textwidth}{!}{
    \begin{tabular}{lcc}
        \toprule
        Dataset & \# Selected Factors & Test Accuracy Ratio \\
        \midrule 
        A vs. S Letter Recognition ($N=1537,\;p=16$) & 8[7,9] & 0.99[0.99,1.00] \\
        Australian Credit Approval ($N=690,\;p=14$) & 5[4,6] & 0.98[0.97,1.01] \\
        Breast Cancer ($N=683,\;p=9$) & 5[4,6] & 0.99[0.98,1.00] \\
        Pima Indians Diabetes ($N=392,\;p=8$) & 3[2,4] & 0.98[0.94,1.03] \\
        Titanic ($N=714,\;p=7$) & 4[3,5] & 0.99[0.97,1.01] \\
        \bottomrule
    \end{tabular}
    }
    \caption{The number of selected factors and the ratio between the test accuracy obtained from the random forest fitted on the selected covariates and the entire set of covariates. The median [20\% quantile, 80\% quantile] are reported from 100 independent replications.}
    \label{tab:classification_real_comp}
\end{table}

Let us study the performance of \texttt{FIRST} on five real data applications: A vs. S Letter Recognition\citep{asuncion2007uci}, Australian credit approval \citep{asuncion2007uci}, Breast Cancer \citep{wolberg1990multisurface}, Pima Indians Diabetes \citep{ripley2007pattern}, and Titanic \citep{cukierski2012titanic}. Similar to the real data regression examples in Subsection~\ref{subsec:regression_real_data_example}, we first filter out the instances with missing values. Next, we split the dataset into 80\% training and 20\% testing and perform factor selection via \texttt{FIRST} on the train set. Last, we fit a random forest model on the train set using only the variables identified by \texttt{FIRST} and then evaluate the accuracy on the test set. Table~\ref{tab:classification_real_comp} reports the ratio of the aforementioned test accuracy to that obtained by fitting random forest on all the factors. We can see that with \texttt{FIRST}, using only about half of the factors is able to achieve a prediction performance that is comparable to the random forest fit with all the covariates. 

\subsection{Connection to the Impurity Measure}
\label{subsec:connection_to_the_impurity_measure}

We now provide some intuition on why $\texttt{FIRST}$ works for the binary classification problem. Recall that the total Sobol' index of $X_{i}$ is 
\begin{align*}
    S_{i}^{\text{tot}} = \frac{\mathbb{E}[\text{Var}\{Y|\bm{X}_{-i}\}]}{\text{Var}[Y]},
\end{align*}
where $Y=\mathbbm{1}(f^{*}(\bm{X})>0)$ is now a binary random variable. We have $\text{Var}[Y|\bm{X}_{-i}] = p_{-i}(1-p_{-i})$ where $p_{-i} = \mathbb{E}[\mathbbm{1}(f^{*}(\bm{X})>0)|\bm{X}_{-i}]$ is the probability of $Y=1$ conditional on $\bm{X}_{-i}$. In the binary classification problem, $p(1-p)$ is half of the Gini impurity measure, with maximum at $p=1/2$ and minimum at $p=0$ or $p=1$. When the full model $f^{*}(\bm{X})$ is known, the Gini impurity is 0 everywhere since we know whether $Y=1$ with probability one. The total Sobol' index $S_{i}^{\text{tot}}$ can then be interpreted as measuring the expected increment in the Gini impurity if $X_{i}$ is dropped from the model, and the larger the increment the more important $X_{i}$ is. This closely aligns with the importance definition of the tree models via the average reduction in the Gini impurity \citep{breiman2001random}. It is also straightforward to see that $X_{i}$ is not part of the model if and only if $S_{i}^{\text{tot}}=0$, and hence demonstrating that \texttt{FIRST} is appropriate for factor selection. 

\section{Conclusions}
\label{sec:conclusion}

In this article we propose \texttt{FIRST} (Algorithm~\ref{algo:first}), a novel procedure for estimating factor importance directly from noisy data. \texttt{FIRST} is motivated from the equivalence between the total Sobol' indices and the intrinsic population-level variable importance, circumventing the model fitting step that is required by the existing factor importance procedures in the literature. Extensive simulations are provided in Section~\ref{sec:simulations_on_regression_problems} and Section~\ref{sec:extension_to_binary_classification_problems} to demonstrate that \texttt{FIRST} not only provides trustworthy factor importance estimation and selection across problems with varying correlation levels and dimensions, but more importantly it outshines its main competitors by completing the task in only a fraction of the computational time.

\vspace{.25in}

\begin{center}
{\Large\bf Acknowledgments}
\end{center}

\noindent This research is supported by  U.S. National Science Foundation grants DMS-2310637 and DMREF-1921873.

\medskip

\bibliography{references}

\begin{thebibliography}{75}
\newcommand{\enquote}[1]{``#1''}
\expandafter\ifx\csname natexlab\endcsname\relax\def\natexlab#1{#1}\fi
\expandafter\ifx\csname url\endcsname\relax
  \def\url#1{{\tt #1}}\fi
\expandafter\ifx\csname urlprefix\endcsname\relax\def\urlprefix{URL }\fi

\bibitem[\protect\citeauthoryear{Asuncion and Newman}{Asuncion and
  Newman}{2007}]{asuncion2007uci}
Asuncion, A. and Newman, D. (2007), \enquote{UCI machine learning repository,}
  \urlprefix\url{https://archive.ics.uci.edu/}.

\bibitem[\protect\citeauthoryear{Azadkia and Chatterjee}{Azadkia and
  Chatterjee}{2021}]{azadkia2021simple}
Azadkia, M. and Chatterjee, S. (2021), \enquote{A simple measure of conditional
  dependence,} {\em The Annals of Statistics\/}, 49, 3070--3102.

\bibitem[\protect\citeauthoryear{Bai, Li, Zhao, and Xu}{Bai
  et~al.}{2014}]{bai2014kernel}
Bai, E.-w., Li, K., Zhao, W., and Xu, W. (2014), \enquote{Kernel based
  approaches to local nonlinear non-parametric variable selection,} {\em
  Automatica\/}, 50, 100--113.

\bibitem[\protect\citeauthoryear{Bentley}{Bentley}{1975}]{bentley1975multidimensional}
Bentley, J.~L. (1975), \enquote{Multidimensional binary search trees used for
  associative searching,} {\em Communications of the ACM\/}, 18, 509--517.

\bibitem[\protect\citeauthoryear{Borgonovo}{Borgonovo}{2007}]{borgonovo2007new}
Borgonovo, E. (2007), \enquote{A new uncertainty importance measure,} {\em
  Reliability Engineering \& System Safety\/}, 92, 771--784.

\bibitem[\protect\citeauthoryear{Borgonovo, Castaings, and Tarantola}{Borgonovo
  et~al.}{2011}]{borgonovo2011moment}
Borgonovo, E., Castaings, W., and Tarantola, S. (2011), \enquote{Moment
  independent importance measures: New results and analytical test cases,} {\em
  Risk Analysis: An International Journal\/}, 31, 404--428.

\bibitem[\protect\citeauthoryear{Borgonovo and Plischke}{Borgonovo and
  Plischke}{2016}]{borgonovo2016sensitivity}
Borgonovo, E. and Plischke, E. (2016), \enquote{Sensitivity analysis: A review
  of recent advances,} {\em European Journal of Operational Research\/}, 248,
  869--887.

\bibitem[\protect\citeauthoryear{Breiman}{Breiman}{2001}]{breiman2001random}
Breiman, L. (2001), \enquote{Random forests,} {\em Machine learning\/}, 45,
  5--32.

\bibitem[\protect\citeauthoryear{Broto, Bachoc, and Depecker}{Broto
  et~al.}{2020}]{broto2020variance}
Broto, B., Bachoc, F., and Depecker, M. (2020), \enquote{Variance reduction for
  estimation of Shapley effects and adaptation to unknown input distribution,}
  {\em SIAM/ASA Journal on Uncertainty Quantification\/}, 8, 693--716.

\bibitem[\protect\citeauthoryear{Chastaing, Gamboa, and Prieur}{Chastaing
  et~al.}{2015}]{chastaing2015generalized}
Chastaing, G., Gamboa, F., and Prieur, C. (2015), \enquote{Generalized Sobol
  sensitivity indices for dependent variables: numerical methods,} {\em Journal
  of statistical computation and simulation\/}, 85, 1306--1333.

\bibitem[\protect\citeauthoryear{Chatterjee}{Chatterjee}{2021}]{chatterjee2021new}
Chatterjee, S. (2021), \enquote{A new coefficient of correlation,} {\em Journal
  of the American Statistical Association\/}, 116, 2009--2022.

\bibitem[\protect\citeauthoryear{Chen, Jin, and Sudjianto}{Chen
  et~al.}{2005}]{chen2005analytical}
Chen, W., Jin, R., and Sudjianto, A. (2005), \enquote{Analytical variance-based
  global sensitivity analysis in simulation-based design under uncertainty,}
  {\em Journal of mechanical design\/}, 127, 875--886.

\bibitem[\protect\citeauthoryear{Cukierski}{Cukierski}{2012}]{cukierski2012titanic}
Cukierski, W. (2012), \enquote{Titanic - Machine Learning from Disaster,}
  Retrieved November 30, 2023 from
  \url{https://kaggle.com/competitions/titanic}.

\bibitem[\protect\citeauthoryear{Da~Veiga, Gamboa, Iooss, and Prieur}{Da~Veiga
  et~al.}{2021}]{sensitivitybook2021}
Da~Veiga, S., Gamboa, F., Iooss, B., and Prieur, C. (2021), {\em Basics and
  Trends in Sensitivity Analysis: Theory and Practice in R\/}, SIAM.

\bibitem[\protect\citeauthoryear{Darlington}{Darlington}{1968}]{darlington1968multiple}
Darlington, R.~B. (1968), \enquote{Multiple regression in psychological
  research and practice.} {\em Psychological bulletin\/}, 69, 161.

\bibitem[\protect\citeauthoryear{Efron and Stein}{Efron and
  Stein}{1981}]{efron1981jackknife}
Efron, B. and Stein, C. (1981), \enquote{The jackknife estimate of variance,}
  {\em The Annals of Statistics\/}, 586--596.

\bibitem[\protect\citeauthoryear{Efroymson}{Efroymson}{1960}]{efroymson1960multiple}
Efroymson, M.~A. (1960), \enquote{Multiple regression analysis,} {\em
  Mathematical methods for digital computers\/}, 191--203.

\bibitem[\protect\citeauthoryear{Fisher, Rudin, and Dominici}{Fisher
  et~al.}{2019}]{fisher2019all}
Fisher, A., Rudin, C., and Dominici, F. (2019), \enquote{All Models are Wrong,
  but Many are Useful: Learning a Variable's Importance by Studying an Entire
  Class of Prediction Models Simultaneously.} {\em J. Mach. Learn. Res.\/}, 20,
  1--81.

\bibitem[\protect\citeauthoryear{Friedman, Hastie, and Tibshirani}{Friedman
  et~al.}{2010}]{friedman2010glmnet}
Friedman, J., Hastie, T., and Tibshirani, R. (2010), \enquote{Regularization
  Paths for Generalized Linear Models via Coordinate Descent,} {\em Journal of
  Statistical Software\/}, 33, 1--22,
  \urlprefix\url{https://www.jstatsoft.org/v33/i01/}.

\bibitem[\protect\citeauthoryear{Friedman}{Friedman}{1991}]{friedman1991multivariate}
Friedman, J.~H. (1991), \enquote{Multivariate adaptive regression splines,}
  {\em The annals of statistics\/}, 19, 1--67.

\bibitem[\protect\citeauthoryear{Friedman}{Friedman}{2001}]{friedman2001greedy}
--- (2001), \enquote{Greedy function approximation: a gradient boosting
  machine,} {\em Annals of statistics\/}, 1189--1232.

\bibitem[\protect\citeauthoryear{Genuer, Poggi, and Tuleau-Malot}{Genuer
  et~al.}{2010}]{genuer2010variable}
Genuer, R., Poggi, J.-M., and Tuleau-Malot, C. (2010), \enquote{Variable
  selection using random forests,} {\em Pattern recognition letters\/}, 31,
  2225--2236.

\bibitem[\protect\citeauthoryear{Genuer, Poggi, and Tuleau-Malot}{Genuer
  et~al.}{2015}]{genuer2015vsurf}
--- (2015), \enquote{VSURF: an R package for variable selection using random
  forests,} {\em The R Journal\/}, 7, 19--33.

\bibitem[\protect\citeauthoryear{Gevrey, Dimopoulos, and Lek}{Gevrey
  et~al.}{2003}]{gevrey2003review}
Gevrey, M., Dimopoulos, I., and Lek, S. (2003), \enquote{Review and comparison
  of methods to study the contribution of variables in artificial neural
  network models,} {\em Ecological modelling\/}, 160, 249--264.

\bibitem[\protect\citeauthoryear{Greenwell}{Greenwell}{2021}]{greenwell2021fastshap}
Greenwell, B. (2021), {\em fastshap: Fast Approximate Shapley Values\/},
  \urlprefix\url{https://CRAN.R-project.org/package=fastshap}. R package
  version 0.0.7.

\bibitem[\protect\citeauthoryear{Gr{\"o}mping}{Gr{\"o}mping}{2007}]{gromping2007estimators}
Gr{\"o}mping, U. (2007), \enquote{Estimators of relative importance in linear
  regression based on variance decomposition,} {\em The American
  Statistician\/}, 61, 139--147.

\bibitem[\protect\citeauthoryear{Gr{\"o}mping}{Gr{\"o}mping}{2009}]{gromping2009variable}
--- (2009), \enquote{Variable importance assessment in regression: linear
  regression versus random forest,} {\em The American Statistician\/}, 63,
  308--319.

\bibitem[\protect\citeauthoryear{Harrison~Jr and Rubinfeld}{Harrison~Jr and
  Rubinfeld}{1978}]{harrison1978hedonic}
Harrison~Jr, D. and Rubinfeld, D.~L. (1978), \enquote{Hedonic housing prices
  and the demand for clean air,} {\em Journal of environmental economics and
  management\/}, 5, 81--102.

\bibitem[\protect\citeauthoryear{Hart and Gremaud}{Hart and
  Gremaud}{2018}]{hart2018approximation}
Hart, J. and Gremaud, P.~A. (2018), \enquote{An approximation theoretic
  perspective of Sobol'indices with dependent variables,} {\em International
  Journal for Uncertainty Quantification\/}, 8.

\bibitem[\protect\citeauthoryear{Hastie}{Hastie}{2017}]{hastie2017generalized}
Hastie, T.~J. (2017), \enquote{Generalized additive models,} in {\em
  Statistical models in S\/}, Routledge, 249--307.

\bibitem[\protect\citeauthoryear{Homma and Saltelli}{Homma and
  Saltelli}{1996}]{homma1996importance}
Homma, T. and Saltelli, A. (1996), \enquote{Importance measures in global
  sensitivity analysis of nonlinear models,} {\em Reliability Engineering \&
  System Safety\/}, 52, 1--17.

\bibitem[\protect\citeauthoryear{Huang}{Huang}{2021}]{huang2021kpc}
Huang, Z. (2021), {\em KPC: Kernel Partial Correlation Coefficient\/}. R
  package version 0.1.0.

\bibitem[\protect\citeauthoryear{Huang, Deb, and Sen}{Huang
  et~al.}{2022}]{huang2022kernel}
Huang, Z., Deb, N., and Sen, B. (2022), \enquote{Kernel partial correlation
  coefficient—a measure of conditional dependence,} {\em The Journal of
  Machine Learning Research\/}, 23, 9699--9756.

\bibitem[\protect\citeauthoryear{Ishigami and Homma}{Ishigami and
  Homma}{1990}]{ishigami1990importance}
Ishigami, T. and Homma, T. (1990), \enquote{An importance quantification
  technique in uncertainty analysis for computer models,} in {\em [1990]
  Proceedings. First international symposium on uncertainty modeling and
  analysis\/}, IEEE.

\bibitem[\protect\citeauthoryear{Jansen}{Jansen}{1999}]{jansen1999analysis}
Jansen, M.~J. (1999), \enquote{Analysis of variance designs for model output,}
  {\em Computer Physics Communications\/}, 117, 35--43.

\bibitem[\protect\citeauthoryear{Kucherenko, Tarantola, and Annoni}{Kucherenko
  et~al.}{2012}]{kucherenko2012estimation}
Kucherenko, S., Tarantola, S., and Annoni, P. (2012), \enquote{Estimation of
  global sensitivity indices for models with dependent variables,} {\em
  Computer physics communications\/}, 183, 937--946.

\bibitem[\protect\citeauthoryear{Kuhn, Wing, Weston, Williams, Keefer,
  Engelhardt, Cooper, Mayer, Kenkel, Team et~al.}{Kuhn
  et~al.}{2020}]{kuhn2020package}
Kuhn, M., Wing, J., Weston, S., Williams, A., Keefer, C., Engelhardt, A.,
  Cooper, T., Mayer, Z., Kenkel, B., Team, R.~C., et~al. (2020),
  \enquote{Package ‘caret’,} {\em The R Journal\/}, 223.

\bibitem[\protect\citeauthoryear{LeCun, Bengio, and Hinton}{LeCun
  et~al.}{2015}]{lecun2015deep}
LeCun, Y., Bengio, Y., and Hinton, G. (2015), \enquote{Deep learning,} {\em
  nature\/}, 521, 436--444.

\bibitem[\protect\citeauthoryear{Li, Rabitz, Yelvington, Oluwole, Bacon, Kolb,
  and Schoendorf}{Li et~al.}{2010}]{li2010global}
Li, G., Rabitz, H., Yelvington, P.~E., Oluwole, O.~O., Bacon, F., Kolb, C.~E.,
  and Schoendorf, J. (2010), \enquote{Global sensitivity analysis for systems
  with independent and/or correlated inputs,} {\em The journal of physical
  chemistry A\/}, 114, 6022--6032.

\bibitem[\protect\citeauthoryear{Li, Zhong, and Zhu}{Li
  et~al.}{2012}]{li2012feature}
Li, R., Zhong, W., and Zhu, L. (2012), \enquote{Feature screening via distance
  correlation learning,} {\em Journal of the American Statistical
  Association\/}, 107, 1129--1139.

\bibitem[\protect\citeauthoryear{Lin and Zhang}{Lin and
  Zhang}{2006}]{lin2006component}
Lin, Y. and Zhang, H.~H. (2006), \enquote{Component selection and smoothing in
  multivariate nonparametric regression,} {\em The Annals of Statistics\/}, 34,
  2272--2297.

\bibitem[\protect\citeauthoryear{Lundberg and Lee}{Lundberg and
  Lee}{2017}]{lundberg2017unified}
Lundberg, S.~M. and Lee, S.-I. (2017), \enquote{A unified approach to
  interpreting model predictions,} {\em Advances in neural information
  processing systems\/}, 30.

\bibitem[\protect\citeauthoryear{Mase, Owen, and Seiler}{Mase
  et~al.}{2022}]{mase2022variable}
Mase, M., Owen, A.~B., and Seiler, B.~B. (2022), \enquote{Variable importance
  without impossible data,} {\em arXiv preprint arXiv:2205.15750\/}.

\bibitem[\protect\citeauthoryear{Molnar}{Molnar}{2020}]{molnar2020interpretable}
Molnar, C. (2020), {\em Interpretable machine learning\/},
  \urlprefix\url{https://christophm.github.io/interpretable-ml-book/}.

\bibitem[\protect\citeauthoryear{Morris}{Morris}{1991}]{morris1991factorial}
Morris, M.~D. (1991), \enquote{Factorial sampling plans for preliminary
  computational experiments,} {\em Technometrics\/}, 33, 161--174.

\bibitem[\protect\citeauthoryear{Murdoch, Singh, Kumbier, Abbasi-Asl, and
  Yu}{Murdoch et~al.}{2019}]{murdoch2019interpretable}
Murdoch, W.~J., Singh, C., Kumbier, K., Abbasi-Asl, R., and Yu, B. (2019),
  \enquote{Interpretable machine learning: definitions, methods, and
  applications,} {\em arXiv preprint arXiv:1901.04592\/}.

\bibitem[\protect\citeauthoryear{Nash, Sellers, Talbot, Cawthorn, and
  Ford}{Nash et~al.}{1994}]{nash1994population}
Nash, W.~J., Sellers, T.~L., Talbot, S.~R., Cawthorn, A.~J., and Ford, W.~B.
  (1994), \enquote{The population biology of abalone (haliotis species) in
  tasmania. i. blacklip abalone (h. rubra) from the north coast and islands of
  bass strait,} {\em Sea Fisheries Division, Technical Report\/}, 48, p411.

\bibitem[\protect\citeauthoryear{Nelsen}{Nelsen}{2006}]{nelsen2006introduction}
Nelsen, R.~B. (2006), {\em An introduction to copulas\/}, Springer.

\bibitem[\protect\citeauthoryear{Oakley and O'Hagan}{Oakley and
  O'Hagan}{2004}]{oakley2004probabilistic}
Oakley, J.~E. and O'Hagan, A. (2004), \enquote{Probabilistic sensitivity
  analysis of complex models: a Bayesian approach,} {\em Journal of the Royal
  Statistical Society: Series B (Statistical Methodology)\/}, 66, 751--769.

\bibitem[\protect\citeauthoryear{Owen}{Owen}{2014}]{owen2014sobol}
Owen, A.~B. (2014), \enquote{Sobol'indices and Shapley value,} {\em SIAM/ASA
  Journal on Uncertainty Quantification\/}, 2, 245--251.

\bibitem[\protect\citeauthoryear{Owen and Prieur}{Owen and
  Prieur}{2017}]{owen2017shapley}
Owen, A.~B. and Prieur, C. (2017), \enquote{On Shapley value for measuring
  importance of dependent inputs,} {\em SIAM/ASA Journal on Uncertainty
  Quantification\/}, 5, 986--1002.

\bibitem[\protect\citeauthoryear{Plischke, Borgonovo, and Smith}{Plischke
  et~al.}{2013}]{plischke2013global}
Plischke, E., Borgonovo, E., and Smith, C.~L. (2013), \enquote{Global
  sensitivity measures from given data,} {\em European Journal of Operational
  Research\/}, 226, 536--550.

\bibitem[\protect\citeauthoryear{Rasmussen, Williams et~al.}{Rasmussen
  et~al.}{2006}]{rasmussen2006gaussian}
Rasmussen, C.~E., Williams, C.~K., et~al. (2006), {\em Gaussian processes for
  machine learning\/}, volume~1, Springer.

\bibitem[\protect\citeauthoryear{Razavi, Jakeman, Saltelli, Prieur, Iooss,
  Borgonovo, Plischke, Piano, Iwanaga, Becker et~al.}{Razavi
  et~al.}{2021}]{razavi2021future}
Razavi, S., Jakeman, A., Saltelli, A., Prieur, C., Iooss, B., Borgonovo, E.,
  Plischke, E., Piano, S.~L., Iwanaga, T., Becker, W., et~al. (2021),
  \enquote{The future of sensitivity analysis: An essential discipline for
  systems modeling and policy support,} {\em Environmental Modelling \&
  Software\/}, 137, 104954.

\bibitem[\protect\citeauthoryear{Ripley}{Ripley}{2007}]{ripley2007pattern}
Ripley, B.~D. (2007), {\em Pattern recognition and neural networks\/},
  Cambridge university press.

\bibitem[\protect\citeauthoryear{Saltelli, Annoni, Azzini, Campolongo, Ratto,
  and Tarantola}{Saltelli et~al.}{2010}]{saltelli2010variance}
Saltelli, A., Annoni, P., Azzini, I., Campolongo, F., Ratto, M., and Tarantola,
  S. (2010), \enquote{Variance based sensitivity analysis of model output.
  Design and estimator for the total sensitivity index,} {\em Computer physics
  communications\/}, 181, 259--270.

\bibitem[\protect\citeauthoryear{Sobol'}{Sobol'}{1993}]{sobol1993sensitivity}
Sobol', I.~M. (1993), \enquote{On sensitivity estimation for nonlinear
  mathematical models,} {\em Mathematical Modeling and Computational
  Experiments\/}, 1, 407--414.

\bibitem[\protect\citeauthoryear{Sobol'}{Sobol'}{2001}]{sobol2001global}
--- (2001), \enquote{Global sensitivity indices for nonlinear mathematical
  models and their Monte Carlo estimates,} {\em Mathematics and computers in
  simulation\/}, 55, 271--280.

\bibitem[\protect\citeauthoryear{Sobol’ and Kucherenko}{Sobol’ and
  Kucherenko}{2009}]{sobol2009}
Sobol’, I.~M. and Kucherenko, S. (2009), \enquote{Derivative based global
  sensitivity measures and their link with global sensitivity indices,} {\em
  Mathematics and Computers in Simulation\/}, 79, 3009--3017.

\bibitem[\protect\citeauthoryear{Song, Nelson, and Staum}{Song
  et~al.}{2016}]{song2016shapley}
Song, E., Nelson, B.~L., and Staum, J. (2016), \enquote{Shapley effects for
  global sensitivity analysis: Theory and computation,} {\em SIAM/ASA Journal
  on Uncertainty Quantification\/}, 4, 1060--1083.

\bibitem[\protect\citeauthoryear{Sz{\'e}kely, Rizzo, and Bakirov}{Sz{\'e}kely
  et~al.}{2007}]{szekely2007measuring}
Sz{\'e}kely, G.~J., Rizzo, M.~L., and Bakirov, N.~K. (2007), \enquote{Measuring
  and testing dependence by correlation of distances,} {\em The Annals of
  Statistics\/}, 35, 2769–2794.

\bibitem[\protect\citeauthoryear{Thodberg}{Thodberg}{1993}]{thodberg1993ace}
Thodberg, H.~H. (1993), \enquote{Ace of bayes: Application of neural networks
  with pruning,} {\em Technical report\/}.

\bibitem[\protect\citeauthoryear{Tibshirani}{Tibshirani}{1996}]{tibshirani1996regression}
Tibshirani, R. (1996), \enquote{Regression shrinkage and selection via the
  lasso,} {\em Journal of the Royal Statistical Society Series B: Statistical
  Methodology\/}, 58, 267--288.

\bibitem[\protect\citeauthoryear{Wei, Lu, and Song}{Wei
  et~al.}{2015}]{wei2015variable}
Wei, P., Lu, Z., and Song, J. (2015), \enquote{Variable importance analysis: A
  comprehensive review,} {\em Reliability Engineering \& System Safety\/}, 142,
  399--432.

\bibitem[\protect\citeauthoryear{Williamson}{Williamson}{2023}]{williamson2023vimp}
Williamson, B.~D. (2023), {\em vimp: Perform Inference on Algorithm-Agnostic
  Variable Importance\/},
  \urlprefix\url{https://CRAN.R-project.org/package=vimp}. R package version
  2.3.3.

\bibitem[\protect\citeauthoryear{Williamson, Gilbert, Carone, and
  Simon}{Williamson et~al.}{2021}]{williamson2021nonparametric}
Williamson, B.~D., Gilbert, P.~B., Carone, M., and Simon, N. (2021),
  \enquote{Nonparametric variable importance assessment using machine learning
  techniques,} {\em Biometrics\/}, 77, 9--22.

\bibitem[\protect\citeauthoryear{Williamson, Gilbert, Simon, and
  Carone}{Williamson et~al.}{2023}]{williamson2023general}
Williamson, B.~D., Gilbert, P.~B., Simon, N.~R., and Carone, M. (2023),
  \enquote{A general framework for inference on algorithm-agnostic variable
  importance,} {\em Journal of the American Statistical Association\/}, 118,
  1645--1658.

\bibitem[\protect\citeauthoryear{Wojtas and Chen}{Wojtas and
  Chen}{2020}]{wojtas2020feature}
Wojtas, M. and Chen, K. (2020), \enquote{Feature importance ranking for deep
  learning,} {\em Advances in Neural Information Processing Systems\/}, 33,
  5105--5114.

\bibitem[\protect\citeauthoryear{Wolberg and Mangasarian}{Wolberg and
  Mangasarian}{1990}]{wolberg1990multisurface}
Wolberg, W.~H. and Mangasarian, O.~L. (1990), \enquote{Multisurface method of
  pattern separation for medical diagnosis applied to breast cytology.} {\em
  Proceedings of the national academy of sciences\/}, 87, 9193--9196.

\bibitem[\protect\citeauthoryear{Wright and Ziegler}{Wright and
  Ziegler}{2017}]{wright2017ranger}
Wright, M.~N. and Ziegler, A. (2017), \enquote{{ranger}: A Fast Implementation
  of Random Forests for High Dimensional Data in {C++} and {R},} {\em Journal
  of Statistical Software\/}, 77, 1--17.

\bibitem[\protect\citeauthoryear{Wu and Hamada}{Wu and
  Hamada}{2011}]{wu2011experiments}
Wu, C.~J. and Hamada, M.~S. (2011), {\em Experiments: planning, analysis, and
  optimization\/}, John Wiley \& Sons.

\bibitem[\protect\citeauthoryear{Yang, Lv, and Wang}{Yang
  et~al.}{2016}]{yang2016model}
Yang, L., Lv, S., and Wang, J. (2016), \enquote{Model-free variable selection
  in reproducing kernel Hilbert space,} {\em The Journal of Machine Learning
  Research\/}, 17, 2885--2908.

\bibitem[\protect\citeauthoryear{Ye and Xie}{Ye and Xie}{2012}]{ye2012learning}
Ye, G.-B. and Xie, X. (2012), \enquote{Learning sparse gradients for variable
  selection and dimension reduction,} {\em Machine learning\/}, 87, 303--355.

\bibitem[\protect\citeauthoryear{Yeh}{Yeh}{1998}]{yeh1998modeling}
Yeh, I.-C. (1998), \enquote{Modeling of strength of high-performance concrete
  using artificial neural networks,} {\em Cement and Concrete research\/}, 28,
  1797--1808.

\bibitem[\protect\citeauthoryear{Zhang and Lin}{Zhang and
  Lin}{2023}]{zhang2023cosso}
Zhang, H.~H. and Lin, C.-Y. (2023), {\em cosso: Fit Regularized Nonparametric
  Regression Models Using COSSO Penalty\/},
  \urlprefix\url{https://CRAN.R-project.org/package=cosso}. R package version
  2.1-2.

\end{thebibliography}

\newpage

\setcounter{algorithm}{0}
\renewcommand{\thealgorithm}{S\arabic{algorithm}}%
\setcounter{table}{0}
\renewcommand{\thetable}{S\arabic{table}}%
\setcounter{figure}{0}
\renewcommand{\thefigure}{S\arabic{figure}}%
\setcounter{section}{0}

\section*{\LARGE{Appendices}}

\appendix

\section{Proof of Lemma~\ref{lm:expecected_mse}}
\label{appendix:proof_of_lemma}

\begin{lemma}
    \label{lm:expecected_mse}
    $\mathbb{E}[(\mathbb{E}[Y|X]-Y)^2] = \mathbb{E}[\textup{Var}(Y|X)]$.
\end{lemma}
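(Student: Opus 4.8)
The plan is to prove this by conditioning on $X$ and invoking the tower property of conditional expectation. Since the quantity $\mathbb{E}[Y\mid X]$ is itself a (measurable) function of $X$, it behaves as a constant once we condition on $X$, which is exactly what makes the inner expectation collapse to a conditional variance.

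Concretely, first I would write
\begin{align*}
    \mathbb{E}\big[(\mathbb{E}[Y\mid X]-Y)^2\big]
    = \mathbb{E}\Big[\,\mathbb{E}\big[(\mathbb{E}[Y\mid X]-Y)^2 \,\big|\, X\big]\Big]
\end{align*}
using the law of total expectation. Next, inside the inner conditional expectation I would treat $\mathbb{E}[Y\mid X]$ as fixed given $X$, so that
\begin{align*}
    \mathbb{E}\big[(\mathbb{E}[Y\mid X]-Y)^2 \,\big|\, X\big]
    = \mathbb{E}\big[(Y-\mathbb{E}[Y\mid X])^2 \,\big|\, X\big]
    = \textup{Var}(Y\mid X)
\end{align*}
by the very definition of conditional variance. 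Substituting back gives $\mathbb{E}[(\mathbb{E}[Y\mid X]-Y)^2] = \mathbb{E}[\textup{Var}(Y\mid X)]$, as claimed. One should note for completeness that all expectations are finite because $f^{*}$ (and hence $Y$) is assumed square integrable over $\mathbb{P}_{\mathcal{X}}$, so the manipulations are justified.

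There is essentially no obstacle here: the only point requiring a small amount of care is the measurability/integrability bookkeeping that lets one pull $\mathbb{E}[Y\mid X]$ out as a constant inside the conditional expectation and apply the tower property, but this is routine given the square-integrability assumption already in force in the paper. An alternative one-line route would be to expand the square and use $\mathbb{E}[Y\,\mathbb{E}[Y\mid X]] = \mathbb{E}[(\mathbb{E}[Y\mid X])^2]$ (again the tower property), but the conditioning argument above is the cleanest and most transparent.
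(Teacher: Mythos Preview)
Your proof is correct. It is, however, a slightly different route from the paper's own argument. The paper expands the square algebraically, uses the tower property to simplify the cross term $\mathbb{E}[\mathbb{E}(Y\mid X)\,Y] = \mathbb{E}[\mathbb{E}(Y\mid X)^2]$, reduces to $\mathbb{E}[Y^2] - \mathbb{E}[\mathbb{E}(Y\mid X)^2]$, and then adds and subtracts $\mathbb{E}[Y]^2$ to recognize $\text{Var}[Y] - \text{Var}[\mathbb{E}(Y\mid X)] = \mathbb{E}[\text{Var}(Y\mid X)]$ via the law of total variance. In other words, the paper takes exactly the ``alternative one-line route'' you mention at the end. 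Your primary argument---condition on $X$ first and identify the inner expectation as the definition of $\text{Var}(Y\mid X)$---is more direct and avoids the detour through the variance decomposition; the paper's expansion is slightly more computational but makes the connection to the law of total variance explicit. Either way the content is the same and both are fully rigorous under the square-integrability assumption in force.
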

\begin{proof}
\begin{align*}
    \mathbb{E}[(\mathbb{E}(Y|X)-Y)^2] &= \mathbb{E}[\mathbb{E}(Y|X)^2] + \mathbb{E}[Y^2] - 2\mathbb{E}[\mathbb{E}(Y|X)Y] \\
    &= \mathbb{E}[\mathbb{E}(Y|X)^2] + \mathbb{E}[Y^2] - 2\mathbb{E}[\mathbb{E}\{\mathbb{E}(Y|X)Y|X\}] \\
    &= \mathbb{E}[\mathbb{E}(Y|X)^2] + \mathbb{E}[Y^2] - 2\mathbb{E}[\mathbb{E}(Y|X)^2] \\
    &= \mathbb{E}[Y^2] - \mathbb{E}[\mathbb{E}(Y|X)^2] \\
    &= \left(\mathbb{E}[Y^2]-\mathbb{E}[Y]^2\right) - \left(\mathbb{E}[\mathbb{E}(Y|X)^2]-\mathbb{E}[Y]^2\right) \\
    &= \left(\mathbb{E}[Y^2]-\mathbb{E}[Y]^2\right) - \left(\mathbb{E}[\mathbb{E}(Y|X)^2]-\mathbb{E}[\mathbb{E}(Y|X)]^2\right) \\
    &= \text{Var}[Y] - \text{Var}[\mathbb{E}(Y|X)] \\
    &= \mathbb{E}[\text{Var}(Y|X)].
\end{align*}
\end{proof}

\section{Additional Simulation Results on Regression}
\label{appendix:additional_simulation_results_on_regression}

\begin{figure}[!t]
    \centering
    \begin{subfigure}{0.32\textwidth}
        \centering
        \includegraphics[width=\textwidth]{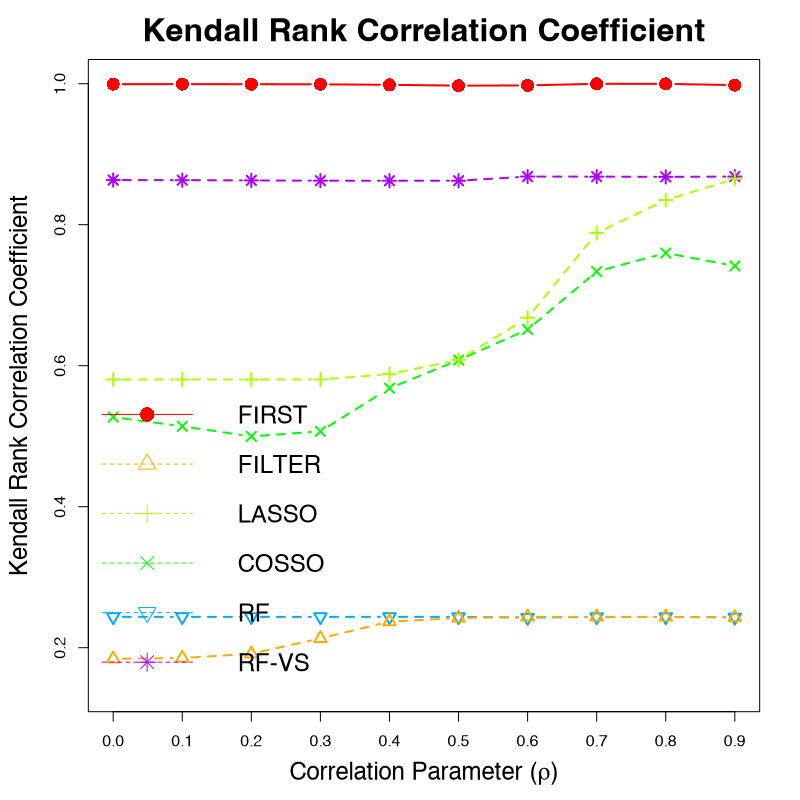}
    \end{subfigure}
    \begin{subfigure}{0.32\textwidth}
        \centering
        \includegraphics[width=\textwidth]{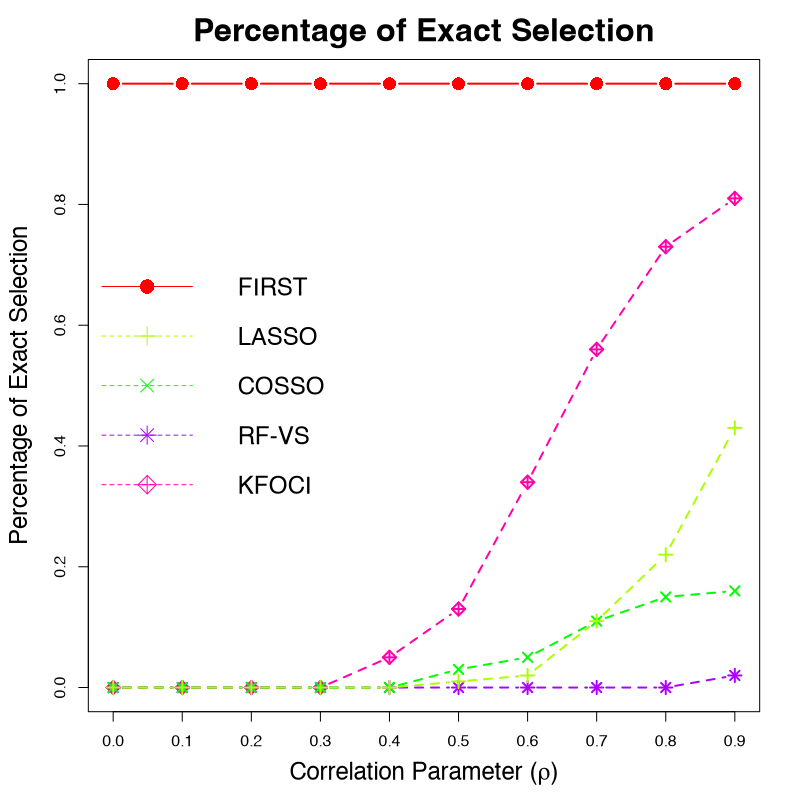}
    \end{subfigure}
    \begin{subfigure}{0.32\textwidth}
        \centering
        \includegraphics[width=\textwidth]{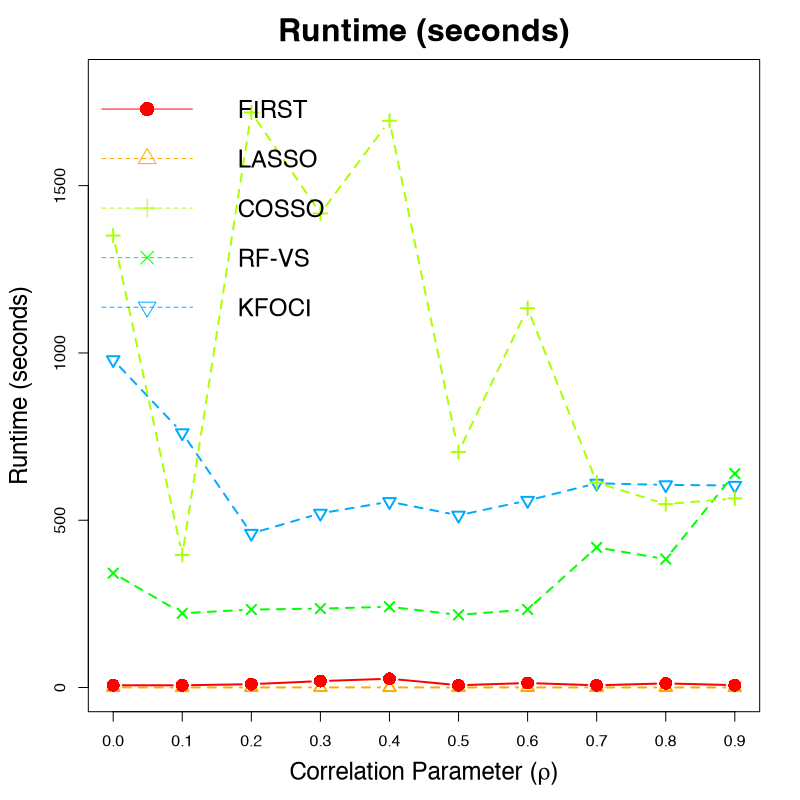}
    \end{subfigure}
    \caption{Comparison of \texttt{FIRST} to other factor importance / selection procedures on the $p=100$ dimensional Ishigami \eqref{eq:ishigami} function across different correlation levels. For each procedure, 100 independent runs are performed, and the following metrics are reported: average Kendall rank correlation coefficient \eqref{eq:kendall_tau} / the percentage of times that the model variables are exactly selected / the average runtime.}
    \label{fig:regression_hd_ishigami}
\end{figure}

Figure~\ref{fig:regression_hd_ishigami} compares the performance of \texttt{FIRST} to its competitors on the 100-dimensional Ishigami function \eqref{eq:ishigami} with $N=1{,}000$ noisy samples. Similar to the comparison on the Friedman function in Figure~\ref{fig:regression_hd}, \texttt{FIRST} outperforms the other methods for factor importance ranking in terms of the Kendall rank correlation coefficient. Moreover, \texttt{FIRST} is able to identify the true variables $\{X_{1},X_{2},X_{3}\}$ exactly 100\% of the times for different correlation levels. Compared to its key competitors \texttt{RF-VS} (for factor importance ranking) and \texttt{KFOCI} (for factor selection), \texttt{FIRST} only uses a fraction of their runtimes to achieve a better and more robust performance on this 100 dimensional example.

Though the computational time is fast for the 100 dimensional examples (Figure~\ref{fig:regression_hd} and ~\ref{fig:regression_hd_ishigami}), from Table~\ref{tab:regression_hd_first} we can see that \texttt{FIRST} is getting computationally more expensive as the dimension grows up. For the strongly correlated ($\rho=0.9$) 1{,}000-dimensional input Friedman function example, on average \texttt{FIRST} takes 361 seconds to complete the computation. To further speed up \texttt{FIRST}, we incorporate the \emph{effect sparsity} principle\citep{wu2011experiments}. The key is to filter out irrelevant factors more efficiently to obtain a parsimonious model. Instead of considering all factors except for the selected ones as candidate in each forward selection step, we only keep the variables that are likely to improve the output variance that can be explained in the candidate set, i.e., in each forward selection step we remove variables that would diminish the explainable variance from the candidate set. Algorithm~\ref{algo:first_fast} details this fast version of \texttt{FIRST}, and the difference from \texttt{FIRST} (Algorithm~\ref{algo:first}) is highlighted in red. Table~\ref{tab:regression_hd_first_fast} reports the comparison of \texttt{FIRST} and \texttt{FIRST-FAST} on the $p=50,100,200,500,1000$ dimensional Friedman function. Clearly we can see that \texttt{FIRST-FAST} is able to reduce the runtime of \texttt{FIRST} by more than 3 times, but in the sacrifice of worse true positive rate, i.e., \texttt{FIRST-FAST} is more likely to leave out some crucial variables. If the computational time is not a major bottleneck, we would recommend \texttt{FIRST} for its better accuracy. 

\begin{algorithm}[t!]
    \caption{\texttt{FIRST-FAST}.}
    \label{algo:first_fast} 
    \begin{algorithmic}[1]
        \State \textbf{Input:} (i) noisy samples $\{(\bm{x}^{(n)}\in\mathbb{R}^{p},y^{(n)}\in\mathbb{R})\}_{n=1}^{N}$, (ii) number of samples for inner loop Monte Carlo $N_{I}$, and (iii) number of samples for outer loop Monte Carlo $N_{O}$.
        \State \textbf{Initialization:} $k=-1$, $A \gets \emptyset$, $\{i_{0}\}=\emptyset$, $\hat{V}_{\emptyset}=0$, \textcolor{red}{$C \gets \{1\!:\!p\}$};
        \Do 
            \State $k \gets k + 1$;
            \State $A \gets A \cup \{i_{k}\}$;
            \State \textcolor{red}{$C \gets C\backslash \{i_{k}\}$};
            \State Choose \textcolor{red}{$i_{k+1}\in C$} such that $\hat{V}_{A\cup\{i_{k+1}\}}$ \eqref{eq:consistent_vare} is maximized, i.e., 
            \begin{align*}
                i_{k+1} = \arg\max_{i\in 1:p\backslash A}\hat{V}_{A\cup\{i\}},
            \end{align*}
            \State \textcolor{red}{and remove $i\in C$ from $C$ such that $\hat{V}_{A\cup\{i\}} < \hat{V}_{A}$, i.e., 
            \begin{align*}
                C \gets C\backslash \{i\in C: \hat{V}_{A\cup\{i\}} < \hat{V}_{A}\};
            \end{align*}}
        \doWhile{\textcolor{red}{$|C| > 0$}};
        \State $\{\hat{S}_{i}^{\text{tot}}\}_{i\in A} \gets \texttt{NANNE-BE}(\{(x_{A}^{(n)},y^{(n)})\}_{n=1}^{N},N_{I},N_{O})$; 
        \State $\hat{\psi}_{i} \gets \hat{S}_{i}^{\text{tot}}$ for $i\in A$ and $\hat{\psi}_{i} \gets 0$ for $i\in 1\!:\!p\backslash A$;
        \State \textbf{Output:} the factor importance $\{\hat{\psi}_{i}\}_{i=1}^{p}$.
    \end{algorithmic}
\end{algorithm}

\begin{table}[t!]
    \centering
    \resizebox{\textwidth}{!}{
    \begin{tabular}{lcccccc}
        \toprule
        & \multicolumn{2}{c}{$\rho=0.0$} & \multicolumn{2}{c}{$\rho=0.5$} & \multicolumn{2}{c}{$\rho=0.9$} \\
        \cmidrule(lr){2-3}\cmidrule(lr){4-5}\cmidrule(lr){6-7}
        & \texttt{FIRST} & \texttt{FIRST-FAST} & \texttt{FIRST} & \texttt{FIRST-FAST} & \texttt{FIRST} & \texttt{FIRST-FAST} \\
        \midrule 
        $p=50$  & 0.99/1.00/0.00/10s & 0.88/0.80/0.00/3s & 1.00/1.00/0.00/9s & 0.93/0.87/0.00/3s & 0.96/0.94/0.00/5s & 0.91/0.84/0.00/2s \\
        $p=100$ & 1.00/1.00/0.00/13s & 0.89/0.81/0.00/9s & 1.00/1.00/0.00/9s & 0.94/0.89/0.00/3s & 0.96/0.93/0.00/8s & 0.92/0.85/0.00/3s \\
        $p=200$ & 1.00/1.00/0.00/22s & 0.89/0.81/0.00/6s & 1.00/1.00/0.00/25s & 0.94/0.89/0.00/7s & 0.96/0.93/0.00/21s & 0.92/0.85/0.00/6s \\
        $p=500$ & 1.00/1.00/0.00/55s & 0.89/0.81/0.00/14s & 1.00/1.00/0.00/103s & 0.94/0.88/0.00/28s & 0.95/0.92/0.00/84s & 0.91/0.84/0.00/24s \\
        $p=1000$ & 1.00/1.00/0.00/168s & 0.88/0.81/0.00/42s & 1.00/1.00/0.00/207s & 0.94/0.88/0.00/55s & 0.95/0.92/0.00/361s & 0.92/0.85/0.00/105s \\
        \bottomrule
    \end{tabular}
    }
    \caption{Comparison of \texttt{FIRST} and \texttt{FIRST-FAST} on the Friedman function with dimension $p=50,100,200,500,1000$. For each setting, 100 independent runs are performed, and the following metrics are reported: average Kendall rank correlation coefficient \eqref{eq:kendall_tau} / true positive rate / false positive rate / the average runtime.}
    \label{tab:regression_hd_first_fast}
\end{table}

\section{Minor Implementation Details}
\label{appendix:minor_implementation_details}

We want to point out a minor implementation detail of the \texttt{NANNE} estimator (Algorithm~\ref{algo:nanne}) that is different from the nearest-neighbor estimator of \citet{broto2020variance}. For the nearest-neighbor estimator, when there are multiple nearest neighbors that are at the same distance, \citet{broto2020variance} suggests to break the tie by random selection and obtain exactly $N_{I}$ nearest neighbors for the conditional variance computation. In actual implementation, we notice that the random selection introduces noise to the estimation, especially when majority of the variables are categorical. Hence, instead of breaking tie by random selection, our proposed \texttt{NANNE} estimator computes the conditional variance using all the samples that are within the distance of the $N_{I}$-th nearest neighbor. This minor modification exhibits a more robust performance on the real world datasets.      

\end{document}